\newcommand{\simplealgorithm}{Simple Algorithm}
\newcommand{\Topt}[2]{T_{{\scriptsize \textup{OPT}}}(#1,#2)}
\newcommand{\Topts}{T_{\scriptsize \textup{OPT}}}   
\newcommand{\dist}[4][]{\if!#1!T_{#2}(#3,#4)\else T_{#2}(#3,#4,#1)\fi}
\newcommand{\TRV}[3]{T_{{\scriptsize \textup{RV}}}(#1,#2,#3)}
\newcommand{\startpos}[1]{s_{#1}}
\newcommand{\starta}{\startpos{A}}
\newcommand{\startb}{\startpos{B}}
\newcommand{\nat}{\mathbb{N}_{+}}
\newcommand{\cG}{\mathcal{G}}
\newcommand{\cA}{\mathcal{A}}
\newcommand{\agentVariable}{K}
\newcommand{\st}{\hspace{0.1cm}\bigl|\bigr.\hspace{0.1cm}}
\begin{document}
\pagestyle{headings}  
\title{Rendezvous of Heterogeneous Mobile Agents in Edge-weighted Networks
\thanks{Research partially supported by the Polish National Science Center grant DEC-2011/02/A/ST6/00201
and by the ANR project DISPLEXITY (ANR-11-BS02-014). 
This study has been carried out in the frame of the ``Investments for 
the future'' Programme IdEx Bordeaux – CPU (ANR-10-IDEX-03-02).
Dariusz Dereniowski was partially supported by a scholarship for outstanding young researchers funded by the Polish Ministry of Science and Higher Education.}}

\institute{Department of Algorithms and System Modeling, Gda\'nsk University of Technology, Poland.
\and
LaBRI, CNRS and University of Bordeaux, France.
\and
LIAFA, University Paris Diderot, France}

\author{Dariusz Dereniowski\inst{1}%
\and 
Ralf Klasing\inst{2}%
\and 
Adrian Kosowski\inst{3} 
\and
{\L}ukasz Kuszner\inst{1}}

\authorrunning{D.Dereniowski \and  R.Klasing \and A.Kosowski \and {\L}.Kuszner} 
\maketitle

\begin{abstract}
We introduce a variant of the deterministic rendezvous problem for a 
pair of heterogeneous agents operating in an undirected graph, which 
differ in the time they require to traverse particular edges of the 
graph. Each agent knows the complete topology of the graph and 
the initial positions of both agents. The agent also knows its 
own traversal times for all of the edges of the graph, but is unaware of 
the corresponding traversal times for the other agent. The goal of the 
agents is to meet on an edge or a node of the graph. In this scenario, we 
study the time required by the agents to meet, compared to the meeting 
time $\Topts$ in the offline scenario in which the agents have complete 
knowledge about each others speed characteristics. When no additional 
assumptions are made, we show that rendezvous in our model can be 
achieved after time $O(n \Topts)$ in a $n$-node graph, and that such 
time is essentially in some cases the best possible. However, we prove 
that the rendezvous time can be reduced to $\Theta (\Topts)$ when the 
agents are allowed to exchange $\Theta(n)$ bits of information at the 
start of the rendezvous process. We then show that under some natural 
assumption about the traversal times of edges, the hardness of the 
heterogeneous rendezvous problem can be substantially decreased, both in 
terms of time required for rendezvous without communication, and the 
communication complexity of achieving rendezvous in time $\Theta (\Topts)$.
\end{abstract}

\section{Introduction}

Solving computational tasks using teams of agents deployed in a network gives rise to many problems of coordinating actions of multiple agents. Frequently, the communication capabilities of agents are extremely limited, and the exchange of large amounts of information between agents is only possible while they are located at the same network node.
In the rendezvous problem, two identical mobile agents, initially located in two nodes of a network, move along links from node to node, with the goal of occupying the same node at the same time. Such a question has been studied in various models, contexts and applications~\cite{opac-b1117985}.

In this paper we focus our attention on heterogeneous agents in networks, where the time required by an agent to traverse an edge of the network  depends on the properties of the traversing agent. In the most general case we consider, the traversal time associated with every edge and every agent operating in the graph may be different. Scenarios in which traversal times depend on the agent are easy to imagine in different contexts. In a geometric setting, one can consider a road connection network, with agents corresponding to different types of vehicles moving in an environment. One agent may represent a typical road vehicle which performs very well on paved roads, but is unable to traverse other types of terrain. By contrast, the other agent may be a specialized mobile unit, such as a vehicle on caterpillars or an amphibious vehicle, which is able to traverse different types of terrain with equal ease, but without being capable of developing a high speed. In a computer network setting, agents may correspond to software agents with different structure, and the transmission times of agents along links may depend on several parameters of the link being traversed (transmission speed, transmission latency, ability to handle data compression, etc.). 

In general, it may be the case that one agent traverses some links faster than the other agent, but that it traverses other links more slowly. We will also analyze more restricted cases, where we are given some a priori knowledge about the structure of the problem. Specially, we will be interested in the case of \emph{ordered agents}, i.e., where we assume that one agent is always faster than the other one, and the case of \emph{ordered edges}, where we assume that if in a fixed pair of links, one agent takes more time to traverse the first link, the same will also be true for the other agent.

We study the rendezvous problem under the assumption that each agent knows the complete topology of the graph and its traversal times for all edges, but knows nothing about the traversal times or the initial location of the other agent. In all of the considered cases, 
we will ask about the best possible time required to reach rendezvous, 
compared to that in the ``offline scenario'', in which each of the agents also has complete knowledge of the parameters of the other agent. 
We will also study how this time can be reduced by allowing the agents to communicate (exchange a certain number of bits at a distance) 
at the start of the rendezvous process.
 
\subsection{The model and the problem}

Let us consider a simple graph $G = (V,E)$ and its weight functions $w_A : E \mapsto \nat$ and $w_B : E \mapsto \nat$, where $\nat$ is the set of positive integers.
Let $\starta, \startb \in V$, $\starta \neq \startb$, be two distinguished nodes of $G$ -- the agents' $A$ and $B$ starting nodes.
We assume that initially an agent $\agentVariable\in\{A,B\}$ knows the graph $G$, $\starta$, $\startb$ and $w_{\agentVariable}$.
Thus, $A$ knows $w_A$ but it does not know $w_B$, and $B$ knows $w_B$ but it does not now $w_A$.
We assume that the nodes of $G$ have unique identifiers and that $G$ is given to each agent together with the identifiers.
The latter in particular implies that the agents have unique identifiers -- they can `inherit' the identifiers of the nodes $\starta$ and $\startb$.
Also, the agents do not see each other unless they meet.

The weight functions indicate the time required for $A$ and $B$ to move along edges.
That is, given an edge $ e = \{u, v\}$, an agent ${\agentVariable}\in\{A,B\}$ needs $w_{\agentVariable}(e)$ units of time to move along $e$ (in any direction).
We assume that both agents start their computation at time $0$ by exchanging messages. The time required to send and to receive a message is negligible.

Once an agent ${\agentVariable}\in\{A,B\}$ is located at a node $v$, it can do one of the following \emph{actions}:
\begin{itemize}
 \item the agent can wait $t\in \nat$ units of time at $v$; after time $t$ the agent will decide on performing another action,
 \item the agent can start a movement from $v$ to one of its neighbors $u$; in such case the agent moves with the uniform speed from $v$ to $u$ along the edge $\{u,v\}$ and after $w_{\agentVariable}(\{v,u\})$ units of time ${\agentVariable}$ arrives at $u$ and then performs its next action.
\end{itemize}
While an agent is performing its local computations preceding an action, it has access to all messages sent by the other agent at time $0$.
We assume that the time of agent's computations preceding an action is negligible.

We say that $A$ and $B$ \emph{rendezvous at time $t$} 
(or simply \emph{meet}) if they share the same location at time $t$,
\begin{itemize}
 \item they both are located at the same node at time $t$, or
 \item ${\agentVariable}\in\{A,B\}$ started a movement from $u_{\agentVariable}$ to $v_{\agentVariable}$ at time $t_{\agentVariable}<t$, $u_A=v_B$, $v_A=u_B$, $e=\{u_A,v_A\}$, $t_{\agentVariable}+w_{\agentVariable}(e)<t$ and $\frac{t-t_A}{w_A(e)}=1-\frac{t-t_B}{w_B(e)}$ (informally speaking, the agents `pass' each other on $e$ as they start from opposite endpoints of $e$), or
 \item ${\agentVariable}\in\{A,B\}$ started a movement from $u$ to $v$ at time $t_{\agentVariable}<t$, $e=\{u,v\}$, $t_{\agentVariable}+w_{\agentVariable}(e)<t$ and $\frac{t-t_A}{w_A(e)}=\frac{t-t_B}{w_B(e)}$ (informally speaking, both agents start at the same endpoint but the one of them `catches up' the other: $t_A<t_B$ and $w_A(e)>w_B(e)$, or $t_A>t_B$ and $w_A(e)<w_B(e)$).
\end{itemize}

Observe that the last case is not possible in an optimum offline solution, as the agents could rendezvous earlier in the vertex $u$.

\noindent
We are interested in the following problem:
\begin{itemize}
 \item[] Given two integers $b$ and $t$, does there exist an algorithm whose execution by $A$ and $B$ guarantees that the agents send to each other at time $0$ messages consisting of at most $b$ bits in total, and $A$ and $B$ meet after time at most $t$?
\end{itemize}

Given an algorithm for the agents, we refer to the total number of bits sent between the agents as the \emph{communication complexity} of the algorithm.
The \emph{rendezvous time} of an algorithm is the minimum time length $t$ such that the agents meet at time $t$ as a result of the execution of the algorithm.

\subsection{Related work}
The rendezvous problem has been thoroughly studied in the literature in different contexts.
In a general setting, the rendezvous problem was first mentioned in \cite{schelling60}. 
Authors investigating  rendezvous (cf.\cite{opac-b1117985} for an extensive survey) considered either the geometric scenario (rendezvous in an interval of the real line, see, e.g.,  \cite{baston98,BasG01,gal99}, 
or in the plane, see, e.g., \cite{anderson98a,anderson98b}) or the graph scenario (see, e.g.,  \cite{DessmarkFKP06,FP08,KM}). 
A natural extension of the rendezvous problem is that of gathering \cite{FPSW05,IJ90,lim96,thomas92}, 
when more than two agents have to meet in one location.

\paragraph{Rendezvous in anonymous graphs.}

In the anonymous graph model, the agents rely on local knowledge of the graph topology, only. Nodes have no unique identifiers, and maintain only a local labeling of outgoing edges (ports) leading to their neighbors. When studying the feasibility and efficiency of deterministic rendezvous in anonymous graphs, a key problem which needs to be resolved is that of breaking symmetry. Without resorting to marking nodes, this can be achieved by taking advantage of the different labels of agents  \cite{DGKKP,DessmarkFKP06,KM}.  Labeled agents allowed to mark nodes using whiteboards were considered in \cite{YuY96}. Rendezvous of labeled agents using variants of Universal Exploration Sequences was also investigated in~\cite{TSZ07,KM} in the synchronous model, who showed that such meeting can be achieved in time polynomial in the number of nodes of the graph and in the length of the smaller of the labels of the agents. For the case of unlabeled agents, rendezvous is not always feasible when the agents move in synchronous rounds and are allowed only to meet on nodes. However, for any feasible starting configuration, rendezvous of anonymous agents can be achieved in polynomial time, and even more strongly, using only logarithmic memory space of the agent~\cite{CKP12}. In the asynchronous scenario, it has recently been shown that agents can always meet within a polynomial number of moves if they have unique labels~\cite{DPV}. For the case of anonymous agents, the class of instances for which asynchronous rendezvous is feasible is quite similar to that in the synchronous case, though under the assumption that agents are also allowed to meet on edges (which appears to be indispensable in the asynchronous scenario), certain configurations with a mirror-type symmetry also turn out to be gatherable~\cite{GP11}.

\paragraph{Location-aware rendezvous.}
The anonymous scenario may be sharply contrasted with the case in which the agent has full knowledge of the map of the environment, and knows its position within it. Such assumption, partly fueled by the availability and the expansion of the Global Positioning System (GPS), is sometimes called the {\em location awareness} of agents or nodes of the network. Thus, the only unknown variable is the initial location of the other agent. In \cite{leszek-async,DISC10} the authors study the rendezvous problem of location-aware agents in the asynchronous case. The authors of~\cite{leszek-async} introduced the concept of covering sequences that permitted location aware agents to meet along the route of polynomial length in the initial distance $d$ between the agents for the case of multi-dimensional grids. Their result was further advanced in~\cite{DISC10}, where the proposed algorithm provides a route, leading to rendezvous, of length being only a polylogarithmic factor away from the optimal rendezvous trajectory. The synchronous case of location-aware rendezvous was studied in~\cite{CCGKM11}, who provided algorithms working in linear time with respect to the initial distance $d$ for trees and grids, also showing that for general networks location-aware rendezvous carried a polylogarithmic time overhead with respect to $n$, regardless of the initial distance $d$.

\paragraph{Problems for heterogenous agents.}
Scenarios with agents having different capabilities have been also studied.
In \cite{CzyzowiczKP13} the authors considered multiple colliding robots with different velocities traveling along a ring with a goal to determine their initial positions and velocities.
Mobile agents with different speeds were also studied in the context of patrolling a boundary, see e.g. \cite{CzyzowiczGKK11,KawamuraK12}.
In \cite{abs-1304-7693} agents capable of traveling in two different modes that differ with maximal speeds were considered in the context of searching a line segment.
We also mention that speed, although very natural, is not the only attribute that can be used to differentiate the agents.
For example, authors in \cite{Chalopin0MPW13} studied robots with different ranges or, in other words, with different battery sizes limiting the distance that a robot can travel.

\subsection{Additional notation}
Let $\dist[w]{\agentVariable}{u}{v}$, ${\agentVariable}\in\{A,B\}$, 
denote the minimum time required by agent ${\agentVariable}$ to move from $u$ to $v$ in $G$ with a weight function $w$.
If $w=w_{\agentVariable}$, then we write $\dist{{\agentVariable}}{u}{v}$ in place of $\dist[w_{\agentVariable}]{\agentVariable}{u}{v}$, 
${\agentVariable}\in\{A,B\}$.
In other words $\dist{\agentVariable}{u}{v}$ is the length of the shortest path from $u$ to $v$ in $G$ with weight function $w_{\agentVariable}$, where the length of a path composed of edges $e_1,\ldots,e_l$ is $\sum_{j=1}^lw_{\agentVariable}(e_j)$.
We use the symbol $\Topt{\starta}{\startb}$ to denote the minimum time for rendezvous in the off-line setting where agents that are initially placed on $\starta$ and $\startb$ know all parameters. We will skip starting positions if it will not lead to confusion writing simply $\Topts$.
Denote also $M_{\agentVariable}:=\max\{w_{\agentVariable}(e)\st e\in E\}$, ${\agentVariable}\in\{A,B\}$, and let $M:=\max\{M_A,M_B\}$.
All logarithms have base $2$, i.e., we write for brevity $\log$ in place of $\log_2$.

The following lemma, informally speaking, implies that we do not have to consider scenarios in which rendezvous occurs on edges, and by doing so we restrict ourselves to solutions among which there exists one that is within a constant factor from an optimal one.
Let $\TRV{\starta}{\startb}{v}$ denote the minimum time for rendezvous at $v$, that is, $\TRV{\starta}{\startb}{v}=\max \{ \dist{A}{\startpos{A}}{v}, \dist{B}{\startpos{B}}{v} \}$.
Let any node $u$ that minimizes the $\TRV{\starta}{\startb}{u}$ be called a \emph{rendezvous node}.

\begin{lemma} \label{lem:meet_on_nodes}
For each graph $G=(V,E)$ and for each $\starta,\startb\in V$, if $u\in V$ is the rendezvous node, then $\TRV{\starta}{\startb}{u} \leq 2\Topt{\starta}{\startb}$.
\end{lemma}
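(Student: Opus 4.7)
The plan is to analyze an optimal offline rendezvous trajectory of length $\Topts$ and identify a node $u'$ with $\TRV{\starta}{\startb}{u'} \leq 2\Topts$; since the rendezvous node $u$ minimizes $\TRV$ over all nodes, this gives $\TRV{\starta}{\startb}{u} \le 2\Topts$. The proof naturally splits into cases based on where and how the optimal meeting happens, using the three possibilities spelled out in the model.

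First I would handle the easy case where the optimal meeting takes place at a node $v$ (first bullet in the definition of rendezvous). Here both agents reach $v$ by time $\Topts$, so $\dist{A}{\starta}{v}, \dist{B}{\startb}{v} \leq \Topts$, giving $\TRV{\starta}{\startb}{v} \leq \Topts$ immediately. The ``catching up'' case (third bullet) is excluded by the remark immediately following the definition: an optimal solution would instead meet at the common starting endpoint.

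The only substantive case is meeting on an edge $e=\{x,y\}$ with agents travelling in opposite directions (second bullet). Write $t_A,t_B < \Topts$ for the times at which $A$ and $B$ begin to traverse $e$ from $x$ and $y$ respectively, and set $\alpha = (\Topts - t_A)/w_A(e)$, $\beta = (\Topts - t_B)/w_B(e)$, so that the meeting condition reads $\alpha + \beta = 1$. The main idea is: the agent that is at least halfway across $e$ has effectively ``paid'' for more than half of $w$-time of the edge, which then lets the opposite endpoint serve as a cheap rendezvous node. Concretely, without loss of generality $\alpha \geq 1/2$, i.e.\ $w_A(e) \leq 2(\Topts - t_A)$; I would then take $u' = y$ and bound
\[
\dist{A}{\starta}{y} \leq \dist{A}{\starta}{x} + w_A(e) \leq t_A + 2(\Topts - t_A) = 2\Topts - t_A \leq 2\Topts,
\]
while $\dist{B}{\startb}{y} \leq t_B \leq \Topts$, yielding $\TRV{\starta}{\startb}{y} \leq 2\Topts$. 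The symmetric case $\beta \geq 1/2$ uses $u' = x$.

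The only part that needs a little care is making sure the upper bound $w_A(e) \leq 2(\Topts - t_A)$ is exactly what is needed to absorb the detour through $x$: after that, everything follows from the triangle inequality $\dist{A}{\starta}{y} \leq \dist{A}{\starta}{x} + w_A(e)$ and from using $t_A,t_B$ as upper bounds on $\dist{A}{\starta}{x}$ and $\dist{B}{\startb}{y}$ (since the agent must have reached that endpoint before starting to move along $e$). Combining the cases, some node $u'\in V$ satisfies $\TRV{\starta}{\startb}{u'} \leq 2\Topts$, so the minimizer $u$ does as well, which is the statement of the lemma.
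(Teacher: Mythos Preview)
Your proof is correct and follows essentially the same approach as the paper: both argue that in an optimal edge-meeting one agent has covered at least half of the edge, hence the opposite endpoint $v_B$ (your $y$) satisfies $\dist{A}{\starta}{v_B}\le 2\Topts$ while $\dist{B}{\startb}{v_B}\le\Topts$, and then conclude by minimality of the rendezvous node. The paper packages the last step as $\min\{\dist{A}{\starta}{v_B},\dist{B}{\startb}{v_A}\}=\min\{\TRVA{v_B},\TRVA{v_A}\}$ via the observation $\dist{A}{\starta}{v_B}>\dist{B}{\startb}{v_B}$, whereas you bound both coordinates of $\TRVA{y}$ directly; this is a cosmetic difference only.
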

\begin{proof}
If the two agents can achieve rendezvous on a node in time $\Topt{\starta}{\startb}$, then the lemma follows and hence we assume in the following that rendezvous occurs on an edge.
For ${\agentVariable}\in\{A,B\}$, let $v_{\agentVariable}$ be the last node visited by ${\agentVariable}$ prior to rendezvous that the two agents achieve in time $\Topt{\starta}{\startb}$.
Observe that $v_A\neq v_B$ and $e=\{v_A,v_B\}\in E$.

In an optimum solution at least one of the agents traversed at least half of $e$, so
\begin{equation} \label{eq:rv_node:1}
	2\Topt{\starta}{\startb} \geq
	\min \{ \dist{A}{\starta}{v_B}, \dist{B}{\startb}{v_A} \}.
\end{equation}
Moreover,  $\dist{A}{\starta}{v_B} > \dist{B}{\startb}{v_B}$ and $\dist{B}{\startb}{v_A} > \dist{A}{\starta}{v_A}$ , so
\begin{eqnarray}
  \min \{ \dist{A}{\starta}{v_B}, \dist{B}{\startb}{v_A} \}
	& =&
	\min \{
		\max \{ \dist{A}{\starta}{v_B}, \dist{B}{\startb}{v_B}\}, \nonumber\\
	& &	\hphantom{\min \{} \max \{ \dist{B}{\startb}{v_A}, \dist{A}{\starta}{v_A}\}
	\} \label{eq:rv_node:2}\\
	&=&
	\min \{ \TRV{\starta}{\startb}{v_B}, \TRV{\starta}{\startb}{v_A} \}. \nonumber
\end{eqnarray}
If $u$ is a rendezvous node, then
\[
  	\min \{ \TRV{\starta}{\startb}{v_B}, \TRV{\starta}{\startb}{v_A} \} \geq \TRV{\starta}{\startb}{u}.
\]
This, \eqref{eq:rv_node:1} and \eqref{eq:rv_node:2} prove the lemma.
\qed\end{proof}

\subsection{Possible restrictions on weight functions}

Arbitrary weight functions might cause very bad performance of rendezvous (see Theorems~\ref{thm:arbitrary+comm+lower} and~\ref{thm:case1+no-comm}).
Thus, beside the arbitrary case, we will be interested in restricted cases, namely:
\begin{enumerate}
\item \label{arbitraryW} $w_A$ and $w_B$ are \emph{arbitrary functions},
\item \label{monotoneEdges}  $\forall_{e_1, e_2 \in E} \,\, w_A(e_1) < w_A(e_2) \iff w_B(e_1) < w_B(e_2)$,
\item \label{monotoneAgents} $\forall_{e \in E} \,\, w_A(e) \leq w_B(e)$ or $\forall_{e \in E} \,\, w_B(e) \leq w_A(e)$.
\end{enumerate}

Case~\ref{arbitraryW} reflects the situation where both agents and edges are not related in terms of time needed to move along them. 
Whenever two functions have the property case~\ref{monotoneEdges}, we will refer to the problem instance as the case of \emph{ordered edges}.
Informally, in such scenario both agents obtain the same ordering of edges (up to resolving ties) with respect to their weights.
The last case reflects the situation where one of the agents is always at least as fast as the other one.
Instances with this property are referred to as the cases of \emph{ordered agents}.

\subsection{Our results}

In this work we analyze the following two extreme scenarios.
In the first scenario (the middle column in Table~\ref{tab:summary}) we consider the communication complexity of algorithms that guarantee that rendezvous occurs in time $\Theta(\Topts)$ regardless of the starting positions.
In the second scenario (the third column) we provide bounds on the rendezvous time in case when the agents send no messages to each other.

\begin{table}[htb]
\begin{center}
\caption{Summary of results ($n$ is the number of nodes of the input graph)}\label{tab:summary}
\begin{tabular}{|p{3.25cm}||p{4.75cm}|p{3.5cm}|} \hline
& communication complexity for \newline
	rendezvous in time $\Theta(\Topts)$
& rendezvous time in case  \newline
 of no communication \\ \hline\hline

Case~\ref{arbitraryW}: arbitrary 
&
$O(n \cdot (\log \log (M \cdot n)))$ (Thm.~\ref{thm:arbitrary+comm_compl}) \newline
$\Omega(n)$ (Thm.~\ref{thm:arbitrary+comm+lower})
 &
$\Theta(n\cdot\Topts)$ (Thms~\ref{thm:arbitrary+no-comm},~\ref{thm:case1+no-comm})

\\ \hline

Case~\ref{monotoneEdges}: ordered edges &
$O(\log\log M+\log^2n)$ (Thm.~\ref{thm:monotone_edges+no-visibility}) \newline
$\Omega(\log n)$ (Thm.~\ref{thm:case2+comm+lower})
&
$O(n\cdot\Topts)$ (Thm.~\ref{thm:arbitrary+no-comm}) \newline
$\Omega(\sqrt{n}\cdot\Topts)$ (Thm.~\ref{thm:case2+no-comm})
\\  \hline

Case~\ref{monotoneAgents}: ordered agents
&  none (Thm.~\ref{thm:orderedagents})
& $\Theta(\Topts)$ (Thm.~\ref{thm:orderedagents}) \\ \hline
\end{tabular}
\end{center}
\end{table}

\section{Communication complexity for $\Theta(\Topts)$ time} \label{sec:comm-compl}
In this section we determine upper and lower bounds for communication complexity of algorithms that achieve rendezvous in asymptotically optimal time.
Section~\ref{sec:comm-compl} is subdivided into three parts reflecting the three cases of weight functions we consider.

\subsection{The case of arbitrary functions}
We start by giving an upper bound on communication complexity of asymptotically optimal rendezvous.
Our method is constructive, i.e., we provide an algorithm for the agents (see proof of Theorem~\ref{thm:arbitrary+comm_compl}).
Then, (cf. Theorem~\ref{thm:arbitrary+comm+lower}) we give the corresponding lower bound.

\begin{theorem} \label{thm:arbitrary+comm_compl}
There exists an algorithm 
that guarantees rendezvous in  $\Theta(\Topts)$ time 
and has communication complexity $O(n(\log\log (M \cdot n)))$ for arbitrary functions. 
\end{theorem}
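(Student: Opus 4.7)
The plan is to have each agent communicate, for every node of $G$, a compact approximation of its own travel time to that node, so that both agents can independently compute the same (approximately) optimal rendezvous node and head straight toward it.

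First, by Lemma~\ref{lem:meet_on_nodes}, it suffices to agree on a meeting node $u$ minimizing $\TRV{\starta}{\startb}{u}$: the resulting rendezvous time will be within a factor of $2$ of $\Topts$. The obstacle is that agent $A$ knows only $\dist{A}{\starta}{v}$ and agent $B$ knows only $\dist{B}{\startb}{v}$, so neither one alone can evaluate $\TRV{\starta}{\startb}{v}=\max\{\dist{A}{\starta}{v},\dist{B}{\startb}{v}\}$. The naive fix -- sending exact distances -- costs $\Theta(\log(nM))$ bits per node, which is too much.

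The key step is to use \emph{geometric approximations}. For each node $v\in V$ and each $\agentVariable\in\{A,B\}$, agent $\agentVariable$ sends the integer $b_\agentVariable(v):=\lceil\log\dist{\agentVariable}{\startpos{\agentVariable}}{v}\rceil$ (with a special symbol when $v=\startpos{\agentVariable}$). Because $\dist{\agentVariable}{\startpos{\agentVariable}}{v}\in[1,nM]$, each $b_\agentVariable(v)$ lies in $\{0,1,\ldots,\lceil\log(nM)\rceil\}$ and can be encoded in $O(\log\log(Mn))$ bits, for a total of $O(n\log\log(Mn))$ bits sent by the two agents together. Setting $\widetilde T_\agentVariable(v):=2^{b_\agentVariable(v)}$ gives $\dist{\agentVariable}{\startpos{\agentVariable}}{v}\le\widetilde T_\agentVariable(v)\le 2\dist{\agentVariable}{\startpos{\agentVariable}}{v}$, so the quantity $\widetilde T_{RV}(v):=\max\{\widetilde T_A(v),\widetilde T_B(v)\}$, which both agents can compute after the exchange, satisfies $\TRV{\starta}{\startb}{v}\le\widetilde T_{RV}(v)\le 2\TRV{\starta}{\startb}{v}$ for every $v$.

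Each agent then independently selects $u:=\arg\min_{v\in V}\widetilde T_{RV}(v)$, breaking ties using the fixed node identifiers so that both agents pick the same $u$. Letting $u^\ast$ denote a true rendezvous node, the approximation guarantee yields
\[
\TRV{\starta}{\startb}{u}\;\le\;\widetilde T_{RV}(u)\;\le\;\widetilde T_{RV}(u^\ast)\;\le\;2\,\TRV{\starta}{\startb}{u^\ast}\;\le\;4\,\Topts,
\]
where the last inequality is Lemma~\ref{lem:meet_on_nodes}. Each agent now follows a shortest (in its own metric) path from its starting node to $u$ and waits there; since both of them take at most $\TRV{\starta}{\startb}{u}$ time to arrive, rendezvous is achieved in time $O(\Topts)$, while the total communication is $O(n\log\log(Mn))$ bits. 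The only subtlety worth guarding against is the edge case where some distance equals $0$ or $1$ and how to encode "unreachable in one's metric" (which does not occur since $G$ is connected and weights are positive); this is handled by reserving one code symbol and does not affect the asymptotic bit count.
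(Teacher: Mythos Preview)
Your proposal is correct and follows essentially the same approach as the paper's proof: both have each agent transmit, for every node, the index of the power-of-two interval containing its distance to that node (your $b_\agentVariable(v)=\lceil\log\dist{\agentVariable}{\startpos{\agentVariable}}{v}\rceil$ is exactly the paper's $r(\agentVariable,j)$ with $\dist{\agentVariable}{\startpos{\agentVariable}}{v_j}\in I_{r(\agentVariable,j)}$), then both agents minimize the resulting factor-$2$ approximation $\widetilde T_{RV}$ (the paper's $T'$) with identifier tie-breaking, and the same four-step chain of inequalities yields the $4\Topts$ bound via Lemma~\ref{lem:meet_on_nodes}.
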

\begin{proof}
Let $I_0=[0,1]$, and for $j>0$ let $I_j=(2^{j-1},2^j]$.
Denote $V=\{v_1,\ldots,v_n\}$, where the vertices are ordered according to their identifiers.
We first formulate an algorithm and then we prove that it has the required properties.
We assume that $A$ is the executing agent and $B$ is the other agent (the algorithm for $B$ is analogous).
\begin{enumerate}
 \item For each $j=1,\ldots,n$ (in this order) send to $B$ the integer $r(A,j)$ such that $\dist{A}{\startpos{A}}{v_j}\in I_{r(A,j)}$.
 \item After receiving the corresponding messages from $B$, construct $T'\colon V\mapsto \nat$ such that
   \[T'(v_j):=\max \{2^{r(A,j)}, 2^{r(B,j)}\}, \quad j\in\{1,\ldots,n\}.\]
 \item Find a node $v_{\rho}$ with minimum value of $T'(v_{\rho})$.
  If more than one such node $v_{\rho}$ exists, then take $v_{\rho}$ to be the one with minimum identifier.
 \item Go to $v_{\rho}$ along a shortest path and stop.
\end{enumerate}

Note that both agents compute the same function $T'$.
This in particular implies that the same vertex $v_{\rho}$, to which each agent goes, is selected by both agents.
Hence, the agents rendezvous at $v_{\rho}$.
The transmission of $r({\agentVariable},j)$ requires $O(\log\log (M \cdot n))$ bits because $r({\agentVariable},j)=O(\log (M \cdot n))$ 
for each ${\agentVariable}\in\{A,B\}$ and $j\in\{1,\ldots,n\}$.
Thus, the communication complexity of the algorithm is $O(n\log\log (M \cdot n))$.

\smallskip
We now give an upper bound on the rendezvous time at $v_{\rho}$. By definition,  for each $j \in\{1,\ldots,n\}$ 
and for each ${\agentVariable}\in\{A,B\}$ we have
\begin{equation*}
2^{r({\agentVariable},j)-1} \leq \dist{\agentVariable}{s_{\agentVariable}}{v_j} \leq 2^{r({\agentVariable},j)}.
\end{equation*}
Thus, having in mind that $\TRV{\starta}{\startb}{v}=\max\{\dist{A}{\starta}{v},\dist{B}{\startb}{v}\}$, we obtain:
\begin{equation} \label{eq:Tprim}
 \frac{1}{2} T'(v_j)\leq \TRV{\starta}{\startb}{v_j} \leq T'(v_j), \quad j\in\{1,\ldots,n\}.
\end{equation}
Now, let $u$ be a rendezvous node. By \eqref{eq:Tprim},  the choice of index $\rho$, again by \eqref{eq:Tprim} and by
Lemma~\ref{lem:meet_on_nodes} we obtain
\begin{equation*}
\TRV{\starta}{\startb}{v_{\rho}} \leq T'(v_{\rho}) \leq T'(u) \leq 2\TRV{\starta}{\startb}{u} \leq
4\Topt{\starta}{\startb},
\end{equation*}
which completes the proof.
\qed\end{proof}

\begin{theorem} \label{thm:arbitrary+comm+lower}
Each algorithm that guarantees rendezvous in time $\Theta(\Topts)$ has communication complexity $\Omega(n)$ for some $n$-node graphs.
\end{theorem}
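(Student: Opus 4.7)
The plan is to prove the lower bound via a standard rectangle argument from deterministic two-party communication complexity. I would exhibit an $n$-node graph $G$ together with a family of weight-function pairs $\{(w_A^x, w_B^y) : x, y \in \{0,1\}^{cn}\}$ parameterized by bit strings of length $\Omega(n)$, designed so that the rendezvous problem on this family forces any correct protocol to behave in essentially $2^{\Omega(n)}$ distinguishable ways.

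The construction of the family should satisfy three properties simultaneously: (a) $T_{\textup{OPT}}$ is bounded by a small absolute constant on every instance $(x,y)$ in the family; (b) on a matched instance $(x,x)$, the set of agent strategies achieving meeting within the required $\Theta(T_{\textup{OPT}})$ time is very narrow, in the sense that the executed strategy $\sigma(x)$ is essentially determined by $x$; and (c) on a crossed instance $(x,y)$ with $x\neq y$, the combination of agent $A$ playing $\sigma(x)$ and agent $B$ playing $\sigma(y)$ fails to meet inside the $\Theta(T_{\textup{OPT}})$ budget. Property (a) is crucial because it denies the algorithm the time to execute a ``sweep'' or fallback after an initial miscoordination.

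Given such a family, I would complete the argument as follows. Any deterministic protocol with $b$ bits of communication partitions the input space into at most $2^b$ combinatorial rectangles (sets of inputs on which the transcript is constant). If two matched inputs $(x,x)$ and $(y,y)$ lie in the same rectangle, then by the rectangle property so do the crossed inputs $(x,y)$ and $(y,x)$. Because each agent's behavior depends only on its own input and the shared transcript, on $(x,y)$ agent $A$ must execute the same strategy $\sigma(x)$ as on $(x,x)$ and agent $B$ must execute $\sigma(y)$; by property (c) this violates the $\Theta(T_{\textup{OPT}})$ guarantee. Hence each rectangle contains at most one diagonal input $(x,x)$ from the fooling set, forcing $2^b \geq 2^{\Omega(n)}$ and thus $b = \Omega(n)$.

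The main obstacle is engineering the graph and weight family to achieve (a), (b) and (c) at once on only $n$ vertices. Naive attempts — e.g.\ parameterizing each of $\Omega(n)$ parallel edges by a bit of $x$ and $y$ and making $v_i$ fast-reachable iff $x_i=y_i=1$ — yield only an $\Omega(\log n)$ bound, because either $T_{\textup{OPT}}$ becomes large on crossed inputs (undoing (a) and giving the algorithm room to recover), or the coordination problem reduces to ``agree on an index in $[n]$''. A workable construction therefore needs auxiliary structure that keeps a short universal meeting available (for (a)) but channels the algorithm into strongly input-dependent committed strategies (for (b), (c)); producing such a graph explicitly is the technical crux of the proof.
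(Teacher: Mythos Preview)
Your proposal has a genuine gap: you explicitly leave the construction of the hard instance family as ``the technical crux of the proof'' without supplying it. More importantly, you have talked yourself out of the right approach. The construction you dismiss as ``naive'' --- a complete bipartite graph $K_{2,n}$ on $\{\starta,\startb,v_1,\dots,v_n\}$, with $w_A(\{\starta,v_i\})\in\{1,X\}$ encoding a bit $x_i$, $w_B(\{\startb,v_i\})\in\{1,X\}$ encoding a bit $y_i$, and every remaining weight equal to $X$ --- is exactly the one the paper uses, and it already yields the full $\Omega(n)$ bound.

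The reason your analysis of this construction stalls is that you force it through a fooling-set template with a fixed diagonal, which indeed fails here because $\Topts$ is not uniformly small across all pairs $(x,y)$. The paper sidesteps this entirely by a reduction to Set Disjointness rather than a direct rectangle argument. On the family above, $\Topts=1$ precisely when some index $j$ has $x_j=y_j=1$, and $\Topts=X$ otherwise. When the intersection is nonempty, any protocol meeting in time $c\cdot\Topts$ must bring both agents to a common $v_j$ with $x_j=y_j=1$ within the first $c$ time units; when the intersection is empty, no meeting is possible before time $\Omega(X)$. Hence, after the $b$-bit rendezvous transcript, each party knows its own constant-length trajectory up to time $c$, and exchanging those trajectories in $O(\log n)$ further bits suffices to decide Disjointness. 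The $\Omega(n)$ lower bound for Disjointness then forces $b=\Omega(n)$. The very feature that wrecks your fooling-set plan --- that $\Topts$ jumps from $1$ to $X$ on disjoint inputs --- is precisely what makes the reduction work.
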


\begin{proof}
Let $\cG$ be a class of graph such that each $G\in\cG$ is a complete bipartite graph  $K_{2, n}$
with $V = \{\starta, \startb, v_1, v_2, \ldots v_n\}$
and $E=E_A\cup E_B$, where
$E_{\agentVariable} = \left\{ \{\startpos{{\agentVariable}}, v_j\} \st j \in \{1, 2, \ldots n \} \right\},\quad {\agentVariable}\in\{A,B\}$,
and, for each ${\agentVariable}\in\{A,B\}$, $w_{\agentVariable}(e)=X$ for each $e\in E\setminus E_{\agentVariable}$ and $w_{\agentVariable}(e) \in \{1, X\}$ for each $e\in E_{\agentVariable}$,
where $X$ is a sufficiently big integer, e.g., $X=n$.

Note that for each $G\in\cG$, $\Topts\in\{1,X\}$.
Moreover, $\Topts=1$ if and only if there exists an index $j\in\{1,\ldots,n\}$ such that $w_A(\{\starta, v_j\})$ = $w_B(\{\startb, v_j\}) = 1$.
A problem to find such an index $j$ is equivalent to a known problem of set intersection~\cite{Kalyanasundaram:1992:PCC:140820.140844}
and requires $\Omega(n)$ bits to be transmitted between $A$ and $B$.
\qed\end{proof}

\subsection{The case of ordered edges}

\begin{theorem} \label{thm:monotone_edges+no-visibility}
There exists an algorithm that guarantees rendezvous in  $\Theta(\Topts)$ time and 
has communication complexity $O(\log\log M+\log^2n)$ in case of monotone edges.
\end{theorem}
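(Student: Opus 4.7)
My plan is to exploit the ordered-edges hypothesis by having both agents deterministically compute the same combinatorial structures from the common edge ordering: edge ranks, the unique minimum-rank spanning tree, and the nested threshold subgraphs $G_1\subseteq G_2\subseteq\cdots\subseteq G_k$, where $G_r$ contains all edges of rank at most $r$. Each of these structures is computable by either agent from its own weight function alone, so the only real task is to reconcile the \emph{numerical} edge weights using few bits of communication.

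I would split the protocol into two stages. In the \emph{scale} stage each agent $\agentVariable\in\{A,B\}$ sends $\lceil\log M_\agentVariable\rceil$, so that after the exchange both agents know $M_A$ and $M_B$ up to a factor of two. Combined with the common edge ranks, this yields for each agent a constant-factor estimate of the other's $\TRV{\starta}{\startb}{v}$ at every vertex $v$, where by Lemma~\ref{lem:meet_on_nodes} we may restrict attention to node rendezvous. The cost is $O(\log\log M)$ bits. In the \emph{descent} stage the agents perform a root-to-leaf descent on a common balanced-depth hierarchy $\mathcal{H}$ of $V$ --- for instance a Bor\r{u}vka-style decomposition built from the chain $G_1\subseteq\cdots\subseteq G_k$, which has depth $O(\log n)$ and depends only on the shared edge ranking. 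At each of the $O(\log n)$ levels the agents exchange $O(\log n)$ bits, essentially the identifier of the child subtree currently preferred from that agent's side, to agree on where to descend. After $O(\log n)$ rounds they converge on a candidate rendezvous vertex $v^\star$ and each travels to it along its own shortest path. The cost of this stage is $O(\log^2 n)$ bits, for a total of $O(\log\log M+\log^2 n)$.

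The main obstacle is the design of $\mathcal{H}$ together with the per-level encoding in the descent stage. One has to show that (i) $\mathcal{H}$ can be defined purely from the common edge ordering so that some vertex with rendezvous cost within a constant factor of optimal is always reachable by a greedy root-to-leaf descent; and (ii) each step of that descent can be encoded in $O(\log n)$ bits using only each agent's private weight function and the coarse scale data from the first stage. The bound $\TRV{\starta}{\startb}{v^\star}=O(\Topts)$ would then follow by combining Lemma~\ref{lem:meet_on_nodes}, the scale estimate, and a telescoping argument over the $O(\log n)$ levels of $\mathcal{H}$. Getting these ingredients right is what makes the $O(\log^2 n)$ term possible without blowing up to the $O(\log n\cdot\log M)$ that one would obtain by naively transmitting per-edge weight levels.
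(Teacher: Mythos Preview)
Your proposal is a plan, not a proof, and it contains both a false claim and an admitted gap. The false claim is in the scale stage: exchanging $\lceil\log M_A\rceil$ and $\lceil\log M_B\rceil$ together with the common edge ranking does \emph{not} give either agent a constant-factor estimate of the other's distances. The ranking plus the maximum weight says nothing about intermediate weights --- for instance, weight vectors $(1,2,M)$ and $(1,M{-}1,M)$ have the same strict ordering and the same maximum, yet a path using the middle edge has lengths differing by a factor of $\Theta(M)$. The gap is the entire descent stage: you define neither the hierarchy $\mathcal{H}$ nor the $O(\log n)$-bit message sent at each level, and you yourself label the design of $\mathcal{H}$ ``the main obstacle'' without resolving it. A Bor\r{u}vka-style decomposition of the shared minimum spanning tree indeed has depth $O(\log n)$, but there is no argument that a greedy root-to-leaf descent on it --- guided only by each agent's private weights and a single global scale --- reaches a vertex within a constant factor of a rendezvous node; that statement is essentially the theorem itself.

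The paper's argument is quite different and uses no hierarchy or descent. Each agent $\agentVariable$ first sends the dyadic index $c_{\agentVariable}$ of its min-cut value $m(w_{\agentVariable})$ (the largest weight such that deleting all strictly heavier edges separates $\starta$ from $\startb$), at cost $O(\log\log M)$; set $c=\min\{c_A,c_B\}$. Then, for each of the $O(\log n)$ dyadic intervals $I_{c-\lceil\log n\rceil},\ldots,I_{c+\lceil\log n\rceil}$, the agent sends the \emph{number} of its edges in that interval, plus the total count below the window --- $O(\log n)$ integers of $O(\log n)$ bits each, hence $O(\log^2 n)$ bits. Because the edge ordering is common, these counts tell each agent exactly \emph{which} of the other's edges lie in each interval, so it can build an approximate weight function $\widetilde{w}$ for the other agent that is correct to within a factor $2$ inside the window, $0$ below it, and $+\infty$ above it. Both agents then compute the same minimizer $v_\rho$ of $\max\{\dist[\widetilde{w}_A]{A}{\starta}{v},\dist[\widetilde{w}_B]{B}{\startb}{v}\}$ and walk there. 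The analysis shows the window is wide enough that edges below it are negligible and edges above it are never used by the optimum, so $v_\rho$ is within a constant factor of $\Topts$. The $O(\log^2 n)$ term thus arises from transmitting $O(\log n)$ edge \emph{counts}, not from a logarithmic-depth search.
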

\begin{proof}
Let $I_0=[0,1]$, and for $j>0$ let $I_j=(2^{j-1},2^j]$.
For ${\agentVariable}\in\{A,B\}$ and a function $w_{\agentVariable}\colon E \mapsto \nat$ let $m(w_{\agentVariable})$ be the maximum integer such that the removal of all edges from $G$ with weights greater than $m(w_{\agentVariable})$ disconnects $G$ in such a way that $\starta$ and $\startb$ belong to different connected components.
For ${\agentVariable}\in\{A,B\}$ and $j\geq 0$, define $r^{\agentVariable}_j=|\{e\in E\st w_{\agentVariable}(e)\in I_j\}|$.

We now give a statement of an algorithm with communication complexity $O(\log\log M+\log^2n)$.
Then, we prove that its execution by each agent guarantees rendezvous in time $\Theta(\Topts)$.
\begin{enumerate}
 \item Let $A$ be the executing agent and let $B$ be the other agent (the statement for $B$ is analogous).
  Send to $B$ the index $c_A$ such that $m(w_A)\in I_{c_A}$ (this requires sending $\log c_A\leq\log\log m(w_A)=O(\log\log M)$ bits).
  Set $c:=\min\{c_A,c_{B}\}$ ($c_{B}$ is in the corresponding message received from $B$).
 \item Send to $B$ the value of $r^A_{c}$ and, for each $j\in\{1,\ldots,\lceil\log n\rceil\}$ send to $B$ the values of $r^A_{c+j}$ and $r^A_{c-j}$ (this requires sending $O(\log^2n)$ bits in total).
 \item Send to $B$ the value of $r^A:=r^A_0+r^A_1+\cdots+r^A_{c-\lceil\log n\rceil-1}$ (this requires sending $O(\log n)$ bits).
 \item After receiving the corresponding messages from $B$
  construct a weight function $\widetilde{w}_{B}\colon E \mapsto \nat$ as follows.
  First, sort the edges so that $w_A(e_j)\leq w_A(e_{j+1})$ for each $j\in\{1,\ldots,|E|-1\}$.
  Denote
   \[E_0^{B}=\{e_1,\ldots,e_{r^{B}}\} \quad\textup{and}\quad E_{\infty}^{B}=E\setminus\{e\st \widetilde{w}_{B}(e)\in I_{0}\cup\cdots\cup I_{c+\lceil\log n\rceil}\}.\]
  Then, $\widetilde{w}_{B}(e):=0$ for each $e\in E_0^{B}$; $\widetilde{w}_{B}(e)=+\infty$ for each $e\in E_{\infty}^{B}$; and for each edge $e\in E\setminus(E_0^B\cup E_{\infty}^B)$ set
  $\widetilde{w}_{B}(e):=2^{j'-1}$ if $e\in I_{j'}$ (this can be deduced from messages received from $B$).
 \item Calculate the function $\widetilde{w}_{A}$ (i.e., the function that $B$ constructs based on the information sent to $B$).
 \item Find a node $v_{\rho}\in V$ such that $\max\{\dist[\widetilde{w}_{A}]{A}{\startpos{A}}{v_{\rho}},\dist[\widetilde{w}_{B}]{B}{\startpos{B}}{v_{\rho}}\}$ is minimum.
  If more than one such node exists, then take $v_{\rho}$ to be the one with minimum identifier.
 \item Go to $v_{\rho}$ along a shortest path and stop.
\end{enumerate}
Note that the communication complexity of the above algorithm is $O(\log\log M+\log^2n)$.
Also, both agents calculate $\widetilde{w}_{A}$ and $\widetilde{w}_{B}$ and hence the node $v_{\rho}$ is the same for both agents, which implies that the algorithm guarantees rendezvous.

Therefore, it remains to prove that 
\[
 \max\{\dist{A}{\startpos{A}}{v_{\rho}},\dist{B}{\startpos{B}}{v_{\rho}}\}=O(\Topt{\startpos{A}}{\startpos{B}}).
\]
Due to Lemma~\ref{lem:meet_on_nodes}, it is enough to show that
\begin{equation} \label{eq:meet_on_nodes}
\max\{\dist{A}{\startpos{A}}{v_{\rho}},\dist{B}{\startpos{B}}{v_{\rho}}\}=O(\TRV{\starta}{\startb}{u}),
\end{equation}
where $u$ is a rendezvous node.
For ${\agentVariable}\in\{A,B\}$ and $x\in\{u,v_{\rho}\}$, let $P_{\agentVariable}^x$ be the set of edges of a shortest path from $\startpos{{\agentVariable}}$ to $x$ in $G$ with weight function $w_{\agentVariable}$ and let $\widetilde{P}_{\agentVariable}^x$ be the set of edges of a shortest path from $\startpos{{\agentVariable}}$ to $x$ in $G$ 
with weight function $\widetilde{w}_{\agentVariable}$.

Note that \eqref{eq:meet_on_nodes} follows from
\begin{equation} \label{eq:meet_on_nodes2}
\max\left\{w_A(P_A^{v_{\rho}}),w_B(P_B^{v_{\rho}})\right\}=O\left(\max\{w_A(P_A^u),w_B(P_B^u)\}\right).
\end{equation}
Hence, we focus on proving the latter equation.

First note that for each ${\agentVariable}\in\{A,B\}$ and $X\subseteq E$ it holds: 
\begin{equation} \label{eq:zero-edges}
w_{\agentVariable}(X\cap E_0^{{\agentVariable}})\leq |X|\cdot 2^{c-\lceil\log n\rceil-1}\leq \frac{|X|}{n}\cdot 2^{c-1} \leq \frac{|X|}{n}\cdot\min\{m(w_A),m(w_B)\}.
\end{equation}
Thus, for each ${\agentVariable}\in\{A,B\}$,
\begin{align} \label{eq:zero-edges:2}
\begin{split}
w_{\agentVariable}(\widetilde{P}_{\agentVariable}^{v_{\rho}}\cap E_0^{\agentVariable}) & \leq \frac{|\widetilde{P}_{\agentVariable}^{v_{\rho}}\cap E_0^{\agentVariable}|}{n}\cdot \min\{m(w_A),m(w_B)\} \\
& < \min\{m(w_A),m(w_B)\}
\end{split}
\end{align}
because the edges in $\widetilde{P}_{\agentVariable}^{v_{\rho}}$ 
form a path which in particular gives $|\widetilde{P}_{\agentVariable}^{v_{\rho}}\cap E_0^{\agentVariable}|<n$.
Inequality \eqref{eq:zero-edges:2} implies that, for each ${\agentVariable}\in\{A,B\}$,
\begin{equation} \label{eq:bound:v:1}
w_{\agentVariable}(\widetilde{P}_{\agentVariable}^{v_{\rho}}) \leq \min\{m(w_A),m(w_B)\} + w_{\agentVariable}(\widetilde{P}_{\agentVariable}^{v_{\rho}}\setminus E_0^{\agentVariable}).
\end{equation}

Note that
\begin{equation} \label{eq:no_big_edges}
P_{\agentVariable}^u\cap E_{\infty}^{\agentVariable} = \emptyset, \quad {\agentVariable}\in\{A,B\}.
\end{equation}
Indeed, otherwise $w_{\agentVariable}(P_{\agentVariable}^u)>2^{c+\lceil\log n\rceil}\geq n2^{c}$ contradicting that $u$ is a rendezvous node (the agent ${\agentVariable}$ with $c_{\agentVariable}=c$ can reach the starting position of the other agent in time not greater than 
$(n-1)2^{c_{\agentVariable}}=(n-1)2^{c}$).

Denote $Y_{\agentVariable}=\{e\in E \st w_{\agentVariable}(e)>2^{c-1}\}$ for each ${\agentVariable}\in\{A,B\}$.
We prove that
\begin{equation} \label{eq:some_bid_edge_is_there}
 Y_A\cap P_A^u\neq\emptyset \quad\textup{or}\quad Y_B\cap P_B^u\neq\emptyset.
\end{equation}
Suppose for a contradiction that $Y_A\cap P_A^u=\emptyset$ and $Y_B\cap P_B^u=\emptyset$.
Assume without loss of generality that $|Y_A|\leq|Y_B|$ (the analysis in the opposite case is analogous).
By definition of $Y_A$, $w_A(e)\leq 2^{c-1}$ for each $e\in P_A^u$.
Since we consider the case of monotone edges, $|Y_A|\leq|Y_B|$ implies that $w_A(e)\leq 2^{c-1}$ for each $e\in P_B^u$.
This however means that a subset of edges in $P_A^u\cup P_B^u$ gives a path whose each edge $e$ satisfies $w_A(e)\leq 2^{c-1}\leq 2^{c_A-1}$, contradicting the choice of $c_A$.
This completes the proof of \eqref{eq:some_bid_edge_is_there}.
Note that \eqref{eq:some_bid_edge_is_there} implies
\begin{equation} \label{eq:min:bound}
\min\{m(w_A),m(w_B)\} \leq 2^c \leq 2 \max\left\{ w_A(P_A^u),w_B(P_B^u) \right\}.
\end{equation}

By the definition of $P_{\agentVariable}^{v_{\rho}}$ and by \eqref{eq:bound:v:1} we have 
\begin{equation} \label{eq:newbound:1}
w_{\agentVariable}(P_{\agentVariable}^{v_{\rho}}) \leq w_{\agentVariable}(\widetilde{P}_{\agentVariable}^{v_{\rho}}) 
\leq w_{\agentVariable}(\widetilde{P}_{\agentVariable}^{v_{\rho}}\setminus E_0) + \min\{m(w_A),m(w_B)\}.
\end{equation}
Recall that  ${w_{\agentVariable}}(e) \leq 2\widetilde{w}_{\agentVariable}(e)$ 
for each $e\in E \setminus E_0^{\agentVariable}$ and ${\agentVariable}\in\{A,B\}$.
Thus,
\begin{equation} \label{eq:newbound:2}
w_{\agentVariable}(\widetilde{P}_{\agentVariable}^{v_{\rho}}\setminus E_0) \leq  
2\widetilde{w}_{\agentVariable}(\widetilde{P}_{\agentVariable}^{v_{\rho}}\setminus E_0).
\end{equation}
Since $\widetilde{w}_{\agentVariable}(e)=0$ for each $e\in E_0^{\agentVariable}$ and ${\agentVariable}\in\{A,B\}$,
\begin{equation} \label{eq:newbound:3}   
\widetilde{w}_{\agentVariable}(\widetilde{P}_{\agentVariable}^{v_{\rho}}\setminus E_0) 
=
\widetilde{w}_{\agentVariable}(\widetilde{P}_{\agentVariable}^{v_{\rho}}).
\end{equation}
Combining \eqref{eq:newbound:1}, \eqref{eq:newbound:2} and \eqref{eq:newbound:3} we obtain 
\begin{equation} \label{eq:importantbound:1}
w_{\agentVariable}(P_{\agentVariable}^{v_{\rho}}) \leq 2 \widetilde{w}_{\agentVariable}(\widetilde{P}_{\agentVariable}^{v_{\rho}}) + \min\{m(w_A),m(w_B)\}.
\end{equation}
By definition of $\widetilde{P}_{\agentVariable}^{u}$ and \eqref{eq:no_big_edges},
\begin{equation} \label{eq:importantbound:2}
\widetilde{w}_{\agentVariable}(\widetilde{P}_{\agentVariable}^{u}) \leq
\widetilde{w}_{\agentVariable}(P_{\agentVariable}^{u}) \leq
w_{\agentVariable}(P_{\agentVariable}^{u}), \quad \agentVariable\in\{A,B\}.
\end{equation}
By \eqref{eq:importantbound:1}, the choice of $v_{\rho}$, \eqref{eq:importantbound:2} and \eqref{eq:min:bound},
\begin{gather*}
\max\left\{w_A(P_A^{v_{\rho}}),w_B(P_B^{v_{\rho}})\right\} \leq 
\hspace*{6cm}\phantom{123} \\
\begin{align}
& \leq \min\{m(w_A),m(w_B)\} + 2\max\left\{\widetilde{w}_{A}(\widetilde{P}_A^{v_{\rho}}),\widetilde{w}_{B}(\widetilde{P}_B^{v_{\rho}})\right\} \nonumber \\
& \leq \min\{m(w_A),m(w_B)\} + 2\max\left\{\widetilde{w}_{A}(\widetilde{P}_A^{u}),\widetilde{w}_{B}(\widetilde{P}_B^{u})\right\} \nonumber \\
& \leq \min\{m(w_A),m(w_B)\} + 2\max\left\{w_A(P_A^{u}),w_B(P_B^{u})\right\} \nonumber \\
& \leq 4\max\left\{ w_A(P_A^u),w_B(P_B^u) \right\}. \nonumber
\end{align}
\end{gather*}
This proves \eqref{eq:meet_on_nodes2} and completes the proof of the theorem.
\qed\end{proof}

\begin{theorem} \label{thm:case2+comm+lower}
Each algorithm that guarantees rendezvous in time $\Theta(\Topts)$ has communication complexity $\Omega(\log n)$ for some $n$-node graphs in case of ordered edges.
\end{theorem}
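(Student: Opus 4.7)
The plan is to identify an adversarial family of $K = \Omega(n)$ instances $I_1,\ldots,I_K$ on a common graph $G$ with $\Theta(n)$ nodes, sharing the same $w_A$ but differing in the $w_B^{(k)}$ (all satisfying the ordered-edges condition with $w_A$), such that: (i) $\Topt{\starta}{\startb}$ in instance $I_k$ equals $\Theta(T)$ for a common value $T$; (ii) in $I_k$, rendezvous within time $O(T)$ can only be achieved at nodes in a small ``target region'' $R_k \subseteq V$ determined by $k$; and (iii) under $w_A$, no trajectory of $A$ running in time $O(T)$ can visit more than a constant number of the sets $R_1,\ldots,R_K$. Once such a family is constructed, a pigeonhole argument yields the desired bound.

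For the construction, I envision a graph in which $\starta$ and $\startb$ are separated by a layer of intermediate vertices $v_1,\ldots,v_K$ (possibly augmented by bridge/shortcut structure), so that each $v_k$ is individually reachable by $A$ in time $\Theta(T)$ while $A$ cannot visit more than $O(1)$ of them within that time budget. The function $w_A$ is chosen so that the rank order it imposes on the edges leaves enough freedom to define, for each $k$, a weight function $w_B^{(k)}$ whose values--though constrained to follow the same ordering as $w_A$--make the $w_B^{(k)}$-shortest route from $\startb$ to $v_k$ short, while forcing every route from $\startb$ to any $v_j$ with $j\neq k$ to be substantially longer under $w_B^{(k)}$. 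The ordered-edges condition is maintained because each $w_B^{(k)}$ arises from $w_A$ only by a rescaling of numerical values that respects the common rank order; the interplay of direct edges and longer bypass routes under $w_B^{(k)}$ is what moves the unique cheap meeting spot among the $v_k$'s.

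Given this family, the lower bound follows by a counting/pigeonhole argument. Suppose a protocol uses total communication at most $b$ bits and guarantees rendezvous in time $O(\Topts)$ on every $I_k$. Then $B$'s message takes at most $2^b$ distinct values, so $A$ executes at most $2^b$ distinct trajectories across the $K$ instances; by condition (iii) each such trajectory intersects only $O(1)$ of the target sets $R_1,\ldots,R_K$ within time $O(T)$. Since rendezvous in $I_k$ must occur in $R_k$, the union of all attainable meeting points across the $2^b$ strategies must cover every $R_k$; this forces $2^b \cdot O(1) \geq K = \Omega(n)$, hence $b = \Omega(\log n)$.

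The main obstacle is the construction itself. The ordered-edges hypothesis rigidly couples $w_A$ and $w_B$, and in simple structures (e.g. $K_{2,n}$ with distinct edge weights) the optimal meeting node is fully determined by the common rank order, leaving essentially no adversarial freedom. Producing $\Omega(n)$ genuinely different optimal meeting locations requires designing $G$ so that $B$ has multiple alternative routes to each $v_i$ whose relative costs change with the actual (as opposed to merely ranked) $w_B$-values, while simultaneously ensuring that no universal ``safe'' strategy for $A$--such as walking directly to $\startb$--is cheap enough under $w_A$ to achieve time $O(\Topts)$ across all instances.
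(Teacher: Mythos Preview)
Your high-level strategy is exactly the paper's: fix $w_A$, build a family of instances with varying $w_B^{(k)}$ that are all order-consistent with $w_A$, show that in instance $k$ any fast rendezvous is confined to a region $R_k$ that $A$ can enter only $O(1)$ times within its budget, and finish by pigeonhole on the message $A$ receives. The counting argument you give is the same one the paper uses.

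The genuine gap is the one you name yourself: the construction. Your ``layer of intermediate vertices $v_1,\ldots,v_K$'' is still essentially the $K_{2,n}$ picture you correctly rule out, and nothing in your sketch explains how to get the needed freedom under the ordered-edges constraint. The paper's construction is not a star but $k$ vertex-disjoint $\starta$--$\startb$ paths $H_1,\ldots,H_k$, each of length $k+2$; hence $n=\Theta(k^2)$ and the family has only $K=k=\Theta(\sqrt n)$ members, which already suffices for $\Omega(\log n)$. The edges are given a single global ranking: on $H_j$, the edge incident to $\startb$ has rank $k^2+k-j+1$, the next $k$ edges have ranks $(j-1)k+1,\ldots,jk$, and the edge incident to $\starta$ has rank $k^2+k+j$. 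All $w_A$-weights are $X+i$ with $X=k^4$, so every edge costs $A$ roughly $X$ and $A$ can enter only $O(1)$ of the $H_j$ within time $O(X)$. In instance $j$, $w_B^{(j)}$ assigns weight $i$ to ranks $i\le jk$, weight $X+i$ to the middle block, and weight $kX+i$ above rank $k^2+k-j+1$. The crucial trick is the \emph{opposite} monotonicity of the two ``boundary'' edges: as $j$ grows, the $\startb$-incident edge of $H_j$ moves down in rank while the $\starta$-incident edge moves up. In instance $j$ this makes the $\startb$-incident edge of every $H_{j'}$ with $j'<j$ fall into the $kX$ block (so $B$ cannot even enter those paths), while every $H_{j'}$ with $j'>j$ has all its interior edges in the $X$ block (so $B$ would need $\Omega(kX)$ to cross half of it). Only $H_j$ has a cheap $\startb$-side and a single $X$-cost bottleneck, giving $\Topts=\Theta(X)$ there and forcing any $O(X)$-time rendezvous onto $H_j$.

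Two adjustments to your plan follow from this. First, aiming for $K=\Omega(n)$ is unnecessary and likely infeasible under ordered edges; $K=\Theta(\sqrt n)$ is what the construction naturally gives and is enough. Second, the mechanism that moves the optimal meeting region is not ``rescaling that respects rank order'' in a single-edge sense, but the interaction of a long block of low-rank interior edges with two boundary edges whose ranks run in opposite directions across the family---that antisymmetric placement is the idea your outline is missing.
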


\begin{proof}
Let $k$ be a positive integer.
We first define a family of graphs $\cG=\{G_1,\ldots,G_k\}$.
Each graph in $\cG$ has the same structure, namely, the vertices $\starta$ and $\startb$ are connected with $k$ node-disjoint paths but the graphs in $\cG$ have different weight functions associated with them.
Each of those paths consists of exactly $k+2$ edges (see Figure~\ref{fig:fig_case2}).
\begin{figure}[htb]
\begin{center}
\includegraphics[scale=0.95]{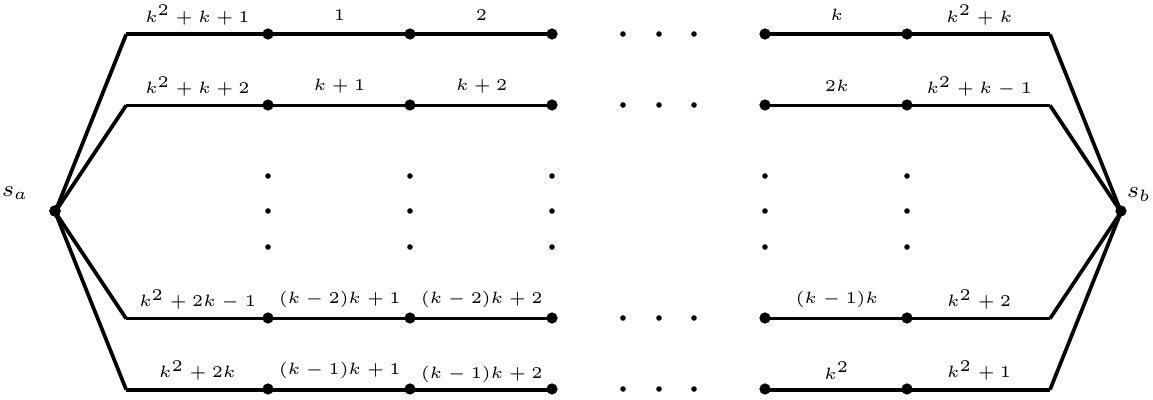}
\caption{The structure of the graphs in the proof of Theorems~\ref{thm:case2+comm+lower} and~\ref{thm:case2+no-comm}; the numbers give ordering of edges with respect to agents' weight functions}
\label{fig:fig_case2}
\end{center}
\end{figure}

Thus, each graph in $\cG$ has $k^2+k+2$ nodes and $m=k^2+2k$ edges.
The edges are denoted by $e_1,\ldots,e_{m}$.
The location of each edge in $\cG$ is shown in Figure~\ref{fig:fig_case2}, where for improving the presentation we write $i$ in place of $e_i$ for each $i\in\{1,\ldots,m\}$.
We will set the labels of the edges so that $w_{\agentVariable}(e_{1}) < \cdots < w_{\agentVariable}(e_{m})$ for each ${\agentVariable} \in \{A, B\}$.
Now, for each graph in $\cG$, we put 
\begin{equation*}
w_A(e_i) := X + i \quad \textup{for each }i\in\{1,\ldots,m\},
\end{equation*} 
where $X=k^4$.
For $i\in\{1,\ldots,m\}$ and $j\in\{1,\ldots,k\}$, we set the weight function $w_B$ for $G_j$ as follows:
\begin{equation*}
  w_B(e_i) := 
	\begin{cases}
  i,  & \text{for } i \leq jk, \\
	X + i, & \text{for } jk <i \leq k^2 + k - j + 1, \\
  kX + i, & \text{for } k^2 + k - j + 1 < i. \\  
	\end{cases}
\end{equation*}
Note that $w_A(e_1)<\cdots<w_B(e_m)$ and $w_B(e_1)<\cdots<w_B(e_m)$ in each graph $G_j\in\cG$ which ensures that all problem instances are cases of ordered edges.

Let $G_j\in\cG$.
Denote by $H_1,\ldots,H_k$ the $k$ edge-disjoint paths connecting $\starta$ and $\startb$, where $H_{j'}$ is the path containing the edge $e_{k^2+k+j'}$ incident to $\starta$, $j'\in\{1,\ldots,k\}$.
We argue that if $A$ and $B$ rendezvous on a path $H_{j'}$ in time at most $kX/2$, then $j'=j$.
First note that $A$ is not able to reach any vertex adjacent to $\startb$ in time $kX/2$.
Also, $j'<j$ is not possible for otherwise $B$ would traverse one of the edges $e_{k^2+k-j+2},\ldots,e_{k^2+k}$, each of weight at least $kX$ --- a contradiction.
Now, suppose for a contradiction that $j'>j$.
Then, one of the agents traverses at least half of the path $H_{j'}$, i.e., it traverses at least $k/2+1$ of its edges.
If this agent is $A$, then clearly rendezvous occurs not earlier than $(k/2+1)X$ --- a contradiction.
If this agent is $B$, then it does not traverse any of the edges $e_{1},\ldots,e_{jk}$, since those belong to paths $H_1,\ldots,H_j$ and we have $j'>j$.
Hence, by the definition of $w_B$, we also have rendezvous after more than $kX/2$ time units, which gives the required contradiction.
We have proved that, in $H_j$, rendezvous is obtained before time $kX/2$ only if it occurs on the path $H_j$.

Observe that for $n$ large enough it holds $\Topt{\starta}{\startb} < 2X$ for each $G_j\in\cG$.
To see that, let $A$ traverse the edge $e_{k^2+k+j}$, and let $B$ traverse the remaining edges of $H_j$, i.e., $e_{(j-1)k+1},e_{(j-1)k+2},\ldots,e_{(j-1)k+k}$ and $e_{k^2+k-j+1}$.
We have $w_A(e_{k^2+k+j})=X+\Theta(k^2)$ and
\begin{align}
 w_B(e_{k^2+k-j+1})+\sum_{l=1}^{k} w_B(e_{(j-1)k+l}) & = X+k^2+k-j+1 + \sum_{l=1}^k((j-1)k+l) \nonumber \\
   & = X+ O(k^3). \nonumber
\end{align}
Since $X=k^4$, we obtain that $\Topt{\starta}{\startb}<2X$ for $n$ large enough.

Suppose for a contradiction that there exists an algorithm $\cA$ that guarantees rendezvous in time $\Theta(\Topts)$ and has communication complexity $o(\log n)$.
Let $C$ be such a constant that the rendezvous time guaranteed by ${\cA}$ is bounded by $C \Topts$.
We will show that for $n>C^4$, the algorithm $\cA$ that sends at most $C_2\log n$ bits, where $C_2 = \frac{3}{8}$, cannot guarantee rendezvous in time $C\Topts$, which will give the desired contradiction.

Note that in each algorithm, and thus in particular in $\cA$, if the agent $A$ receives the same message from $B$ in two different graphs in $\cG$, then $A$ must traverse the same sequence of edges for both graphs.

The number of all possible messages that $A$ might receive using $C_2 \log n$ bits is at most $2^{C_2 \log n}$.
Observe that $2^{C_2 \log n} < \frac{k}{C}$. Indeed, 
\[
  C_2 = \frac{3}{8} = \frac{\log n^{3/4}}{2 \log n} < \frac{\log (n/C)}{2 \log n} < \frac{\log (k/C)}{\log n}  
\] 
implies the required inequality.
Thus, there must exist $\cG'$, a subset of $\cG$, with at least $C+1$ elements such that $A$ traverses the same sequence of edges for all graphs in $\cG'$. 

To traverse an edge adjacent to $\starta$ from $H_i$, the agent $A$ requires $X+i > X$ units of time.
In order to traverse any of the remaining edges, $A$ requires to traverse back the mentioned edge first using $X+i > X$ units of time again.
Hence, during $C\Topts < 2CX$ time, $A$ is able to traverse at most $C$ edges adjacent to $\starta$. 
It implies that there must exist an index $j$ such that $G_j \in \cG'$ and agent $A$ does not traverse an edge from $H_j$ adjacent to $\starta$. 
It means that $A$ has not met $B$ in time $2CX  > C\Topts$, a contradiction.   
\qed\end{proof}

\subsection{The case of ordered agents}
Now we will present a general solution which achieves $\Theta(\Topt{\starta}{\startb})$ time without communication for the case of ordered agents.
This property allows us to obtain our asymptotically tight bounds both for communication complexity of optimal-time rendezvous and for optimal rendezvous time with no communication.
We point out that, unlike in previous cases, the algorithms for the agents are different, i.e., $A$ and $B$ perform different (asymmetric) actions.
We assume that the algorithm for the agent $A$ (respectively, $B$) is executed by the agent whose starting node has smaller (bigger, respectively) identifier.
Note that, since both agents know the graph and both starting nodes, they can correctly decide on executing an algorithm.

\subsubsection*{\simplealgorithm}

Tasks of agent $A$:
\begin{enumerate}
  \item wait $\dist{A}{\starta}{\startb}$ units of time,
  \item  go to $\startb$ along an arbitrarily chosen shortest path (according to the weight function $w_A$) from $\starta$ to $\startb$, and return to $\starta$ along the same path and stop.
\end{enumerate}
\noindent
Tasks of agent $B$:
\begin{enumerate}
 \item  wait $\dist{B}{\starta}{\startb}$ units of time,
 \item go to $\starta$ along an arbitrarily chosen shortest path (according to the weight function $w_B$) from $\startb$ to $\starta$ and stop.
\end{enumerate}

\begin{lemma} \label{lem:verysimpleprotocol}
For the case of ordered agents, {\simplealgorithm} guarantees rendezvous in time $6 \min \{ \dist{A}{\starta}{\startb}, \dist{B}{\starta}{\startb}\}$.
\end{lemma}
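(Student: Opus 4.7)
I would set $T_A:=\dist{A}{\starta}{\startb}$ and $T_B:=\dist{B}{\starta}{\startb}$, and track the location of each agent at the two starting nodes over time. From the algorithm, agent $A$ is stationary at $\starta$ during $[0,T_A]$, reaches $\startb$ at time $2T_A$, and halts at $\starta$ from time $3T_A$ onward. Agent $B$ is stationary at $\startb$ during $[0,T_B]$ and halts at $\starta$ from time $2T_B$ onward.

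From these schedules I would extract three witnesses to a meeting. First, if $2T_A\leq T_B$, both agents occupy $\startb$ at time $2T_A$. Second, if $2T_B\leq T_A$, both agents occupy $\starta$ at time $2T_B$ (while $A$ is still in its initial waiting phase). Third, unconditionally, both agents occupy $\starta$ from time $\max\{3T_A,2T_B\}$ onward, since $A$ returns and stops at $\starta$ at time $3T_A$ and $B$ reaches and stops at $\starta$ at time $2T_B$.

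The proof then concludes by a three-way case split. (a) If $T_B\geq 2T_A$, the first witness gives rendezvous by time $2T_A$, and since $\min\{T_A,T_B\}=T_A$, we obtain the required bound $2T_A\leq 6\min\{T_A,T_B\}$. (b) Symmetrically, if $T_A\geq 2T_B$, the second witness gives rendezvous by time $2T_B\leq 6T_B=6\min\{T_A,T_B\}$. (c) Otherwise $T_A<2T_B$ and $T_B<2T_A$; the third witness yields rendezvous by time $\max\{3T_A,2T_B\}$, which I would finish as follows. In the subcase $T_A\leq T_B$, the inequality $2T_B<4T_A$ bounds this maximum by $4T_A\leq 6T_A=6\min\{T_A,T_B\}$; in the subcase $T_B\leq T_A$, the inequality $3T_A<6T_B$ bounds it by $6T_B=6\min\{T_A,T_B\}$.

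The argument is essentially bookkeeping; no individual step is hard. The mild conceptual point is that the ``wait, then visit'' choreography forces joint occupancy of one of the endpoints either during a synchronous waiting phase or once both agents have completed their tasks at $\starta$. Interestingly, I do not see where the ordered-agents hypothesis is used in establishing the $6\min$ bound itself; it appears to be needed only later, to relate $\min\{T_A,T_B\}$ to $\Topts$ when deducing $\Theta(\Topts)$-time rendezvous in the subsequent theorem.
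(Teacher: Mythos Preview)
Your proof is correct and follows essentially the same bookkeeping as the paper, with a slightly different case decomposition. The paper splits into four cases according to which phase of $A$'s itinerary the meeting falls in (initial wait at $\starta$; traversal to $\startb$ and back; the instant of return to $\starta$; final wait), and in each case bounds the meeting time simultaneously by a multiple of $T_A$ and of $T_B$. You instead split on the ratio $T_A/T_B$ and only invoke node-meetings at $\starta$ or $\startb$ as witnesses, which is a mild simplification since it sidesteps any reasoning about meetings that occur mid-edge. Your closing remark is also on target: the paper's own proof of this lemma makes no use of the ordered-agents hypothesis either; that assumption enters only in the subsequent lemma, where $\min\{T_A,T_B\}$ is related to $\Topts$.
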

\begin{proof}	
For sure, $A$ and $B$ will eventually rendezvous, as both of them reach $\starta$ and stay there.
Let $y$ be the time point at which the agents rendezvous.
Let us consider agent $A$. It might meet $B$ while:
\begin{enumerate}
	\item waiting $\dist{A}{\starta}{\startb}$ units of time at $\starta$. In this case $y = 2\dist{B}{\starta}{\startb}  \leq  \dist{A}{\starta}{\startb}$.
	\item moving towards $\startb$ or on the way back to $\starta$.
              Clearly $y \leq  3\dist{A}{\starta}{\startb}$.
              Also, agent $B$ at time point $y$ is either at $\startb$ or is moving from $\startb$ to $\starta$.
              Thus, $y \leq 2\dist{B}{\starta}{\startb}$.

	\item arriving at $\starta$, i.e., rendezvous occurs at $\starta$ at the moment when $A$ returns to $\starta$.
              Clearly, $y\leq 3\dist{A}{\starta}{\startb}$.
              As the agents have not met at $\starta$ before $A$ started moving, we have $\dist{A}{\starta}{\startb}\leq 2\dist{B}{\starta}{\startb}$.
	      So, $y\leq 6T_B(\starta,\startb)$.

	\item waiting for $B$ at $\starta$ after the path traversals.
              Clearly, $y=2\dist{B}{\starta}{\startb}$.
              In this case, as the agents have not met at $\startb$, we have $\dist{B}{\starta}{\startb} \leq 2\dist{A}{\starta}{\startb}$.
	      So, $y\leq 4\dist{A}{\starta}{\startb}$.
	
\end{enumerate}
\qed\end{proof}

We remark that the constant $6$ from Lemma~\ref{lem:verysimpleprotocol} might be reduced to $2\sqrt{2}+3$
if we would allow both agents $A$ and $B$ to wait a little longer in the initial state:
$\sqrt{2} \dist{A}{\starta}{\startb}$ and $\sqrt{2} \dist{B}{\starta}{\startb}$ respectively.

\begin{lemma} \label{lem:ordered-agents}
In the case of ordered agents we have
\[ \min \{ T_A(\starta,\startb), \dist{B}{\starta}{\startb} \} \leq 2 \Topt{\starta}{\startb}
\]
\end{lemma}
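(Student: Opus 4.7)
The plan is to exploit the ordering assumption directly via a triangle-inequality argument applied to an optimal rendezvous solution, bypassing Lemma~\ref{lem:meet_on_nodes} so as to obtain the sharper constant $2$ rather than $4$.

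Without loss of generality assume $w_A(e) \leq w_B(e)$ for every $e \in E$; then $T_A(u,v) \leq T_B(u,v)$ for every pair $u,v \in V$, so that $\min\{T_A(\starta,\startb), T_B(\starta,\startb)\} = T_A(\starta,\startb)$, and the goal reduces to showing $T_A(\starta,\startb) \leq 2\Topt{\starta}{\startb}$.

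Fix an optimal offline rendezvous schedule and split into cases based on the rendezvous location. If the agents meet at a node $v$, then $\Topts = \max\{T_A(\starta,v), T_B(\startb,v)\}$, and by the triangle inequality together with $T_A(v,\startb) \leq T_B(v,\startb)$, one gets $T_A(\starta,\startb) \leq T_A(\starta,v) + T_B(\startb,v) \leq 2\Topts$. If the agents meet on an edge $e=\{v_A,v_B\}$ in the ``pass-each-other'' sense (the ``catch-up'' case being ruled out in an optimum, as noted in the paper), let $t_A, t_B$ be the times at which each agent started traversing $e$ and set $\alpha = (\Topts-t_A)/w_A(e)$, $\beta = (\Topts-t_B)/w_B(e)$; the rendezvous condition gives $\alpha + \beta = 1$, and clearly $T_A(\starta,v_A) \leq t_A$ and $T_B(\startb,v_B) \leq t_B$.

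The key step is to bound $T_A(\starta,\startb)$ along the route $\starta \to v_A \to v_B \to \startb$:
\begin{equation*}
T_A(\starta,\startb) \leq T_A(\starta,v_A) + w_A(e) + T_A(v_B,\startb) \leq t_A + w_A(e) + t_B,
\end{equation*}
where the last inequality uses $T_A(v_B,\startb) \leq T_B(v_B,\startb) \leq t_B$. Substituting $t_A = \Topts - \alpha w_A(e)$ and $t_B = \Topts - \beta w_B(e)$ yields
\begin{equation*}
T_A(\starta,\startb) \leq 2\Topts + \beta(w_A(e) - w_B(e)) \leq 2\Topts,
\end{equation*}
since $\beta \geq 0$ and $w_A(e) \leq w_B(e)$. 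The main (and only) subtlety is the edge case: one must be careful to account for the fact that neither agent finishes traversing $e$ by time $\Topts$, which is exactly why the parametrization via $\alpha,\beta$ with $\alpha+\beta=1$ is useful. Everything else is a direct application of the triangle inequality combined with the ordering hypothesis $w_A \leq w_B$.
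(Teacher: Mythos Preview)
Your proof is correct and follows essentially the same approach as the paper: assume without loss of generality that $A$ is the faster agent, then bound $T_A(\starta,\startb)$ by a triangle inequality through the optimal meeting point, using $w_A \leq w_B$ to replace $A$'s cost on the second leg by $B$'s. The only difference is cosmetic: the paper handles the edge-meeting case in one stroke by writing $T_{\agentVariable}(u,x)$ for a point $x$ interior to an edge (``with a slight abuse of notation''), whereas you unpack this case explicitly via the parametrization $\alpha+\beta=1$; the underlying argument is identical.
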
	
\begin{proof}
Suppose that both agents rendezvous at $x$ after $\Topt{\starta}{\startb}$ units of time.
If rendezvous does not occur at a node, then with a slight abuse of notation we write $\dist{{\agentVariable}}{u}{x}$ to denote the time an agent ${\agentVariable}$ needs to go from a node $u$ to $x$. 
Suppose without loss of generality that $A$ is a `faster' agent, i.e., $w_A(e)\leq w_B(e)$ for each edge $e$.
This in particular implies that $\dist{A}{\starta}{\startb}\geq \dist{B}{\starta}{\startb}$ and hence it remains to provide the upper bound on $\dist{A}{\starta}{\startb}$.
Moreover, by first using the triangle inequality we have
$\dist{A}{\starta}{\startb} \leq T_A(\starta, x) + T_A(x, \startb) \leq T_A(\starta, x) + T_B(x, \startb) \leq 2\Topt{\starta}{\startb}$.
\qed\end{proof}

Now, due to Lemmas~\ref{lem:verysimpleprotocol} and~\ref{lem:ordered-agents}, we are ready to conclude:
\begin {theorem} \label{thm:orderedagents}
In the case of ordered agents (case~\ref{monotoneAgents}) there exists an algorithm that guarantees rendezvous in time $\Theta(\Topt{\starta}{\startb})$
without performing any communication. \qed
\end{theorem}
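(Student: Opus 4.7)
The plan is to use the \simplealgorithm{} introduced just above and simply chain the two preceding lemmas. The key observation making this a zero-communication protocol is the symmetry-breaking: since both agents know the graph and both starting nodes (with their unique identifiers), they can each unambiguously decide locally which of them plays the role of $A$ (the one whose starting node has the smaller identifier) and which plays the role of $B$, without exchanging any bits.

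With that in place, I would first invoke Lemma~\ref{lem:verysimpleprotocol}, which asserts that in the ordered-agents case the \simplealgorithm{} guarantees rendezvous in time at most $6\min\{\dist{A}{\starta}{\startb},\dist{B}{\starta}{\startb}\}$. Then I would invoke Lemma~\ref{lem:ordered-agents}, which, using the ordered-agents assumption (one of the weight functions pointwise dominates the other), gives $\min\{\dist{A}{\starta}{\startb},\dist{B}{\starta}{\startb}\}\leq 2\Topt{\starta}{\startb}$. Composing the two bounds yields a rendezvous time of at most $12\,\Topt{\starta}{\startb}$, which is $O(\Topt{\starta}{\startb})$.

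For the matching lower bound I would only note that $\Topt{\starta}{\startb}$ is by definition the optimal rendezvous time even in the offline setting where each agent knows all parameters of the other, so no online algorithm (with or without communication) can do better than $\Omega(\Topt{\starta}{\startb})$. Combined with the upper bound above, this proves the $\Theta(\Topt{\starta}{\startb})$ claim.

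There is essentially no real obstacle in this last step since all the technical content has been absorbed into the two preceding lemmas; the only thing to be careful about is that the role assignment inside the \simplealgorithm{} (based on identifiers of starting nodes) is independent of which agent is the pointwise faster one, but this is harmless because Lemma~\ref{lem:verysimpleprotocol} bounds the rendezvous time by a symmetric quantity (the minimum of $\dist{A}{\starta}{\startb}$ and $\dist{B}{\starta}{\startb}$), so the role assignment does not need to be aligned with the speed ordering.
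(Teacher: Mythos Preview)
Your proposal is correct and matches the paper's approach exactly: the paper's proof of Theorem~\ref{thm:orderedagents} is literally the one-line observation that the result follows from Lemmas~\ref{lem:verysimpleprotocol} and~\ref{lem:ordered-agents}, and you have spelled out precisely that chaining (together with the trivial $\Omega(\Topts)$ lower bound and the identifier-based role assignment already described before the algorithm). There is nothing to add.
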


\section{Rendezvous with no communication}

\subsection{The case of arbitrary functions}
Theorem~\ref{thm:arbitrary+no-comm} below gives the upper bound on rendezvous time without communication.
Then, Theorem~\ref{thm:case1+no-comm} provides our lower bound for this case.

\begin{theorem} \label{thm:arbitrary+no-comm}
There exists an algorithm that without performing any communication guarantees rendezvous in time $O(n\cdot\Topt{\starta}{\startb})$, where $n$ is the number of nodes of the network.
\end{theorem}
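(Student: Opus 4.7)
The plan is to combine a doubling search over the unknown value of $\Topt{\starta}{\startb}$ with a canonical enumeration $v_1,\ldots,v_n$ of the vertices (say, by identifier), so that the two agents can synchronize implicitly without exchanging any bits. Because the agents cannot agree on a single best rendezvous node in advance, they will instead try every vertex as a candidate, and appeal to Lemma~\ref{lem:meet_on_nodes} to argue that a single dedicated attempt at the true rendezvous node $u$ succeeds once the current guess for $\Topts$ has become large enough.

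Concretely, the algorithm proceeds in phases indexed by $T\in\{1,2,4,\ldots\}$. Phase $T$ consists of $n$ consecutive \emph{subphases}, each of duration $2T$. In the $i$th subphase of phase $T$, an agent $\agentVariable\in\{A,B\}$ with $\dist{\agentVariable}{\startpos{\agentVariable}}{v_i}\leq T$ travels to $v_i$ along a shortest $w_{\agentVariable}$-path, waits at $v_i$ until exactly $2T-\dist{\agentVariable}{\startpos{\agentVariable}}{v_i}$ time units have elapsed since the start of the subphase, and then returns to $\startpos{\agentVariable}$ along the same path; if $\dist{\agentVariable}{\startpos{\agentVariable}}{v_i}>T$ the agent simply idles at $\startpos{\agentVariable}$ for the entire $2T$ units. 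Every subphase therefore ends with each agent back at its own start vertex, so the whole schedule is deterministically self-synchronizing and needs no communication.

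For the analysis, let $T^{*}$ be the smallest power of two with $T^{*}\geq 2\Topt{\starta}{\startb}$, so that $T^{*}\leq 4\Topt{\starta}{\startb}$, and let $u=v_{i^{*}}$ denote the rendezvous node. Lemma~\ref{lem:meet_on_nodes} yields $\max\{\dist{A}{\starta}{u},\dist{B}{\startb}{u}\}=\TRV{\starta}{\startb}{u}\leq 2\Topt{\starta}{\startb}\leq T^{*}$. Hence in subphase $i^{*}$ of phase $T^{*}$ both agents do visit $u$, and each of them occupies $u$ throughout the interval $[\dist{\agentVariable}{\startpos{\agentVariable}}{u},\,2T^{*}-\dist{\agentVariable}{\startpos{\agentVariable}}{u}]$ measured from the subphase start. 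Both of these intervals contain the midpoint $T^{*}$, so the agents share vertex $u$ at that moment and rendezvous. The total elapsed time is at most the sum of durations of all phases up to and including $T^{*}$, i.e.\ $\sum_{k=0}^{\log_{2}T^{*}}2n\cdot 2^{k}=O(nT^{*})=O(n\cdot \Topt{\starta}{\startb})$.

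The main obstacle is to avoid desynchronization in the absence of communication: more naive strategies, such as having each agent visit the candidate vertices in increasing order of its \emph{own} travel distance, would use agent-dependent timings and risk never placing both agents at the same vertex at the same moment. Anchoring the subphase boundaries to agent-independent quantities (powers of two and the identifier-order enumeration), and forcing each agent to pad its visit to the fixed deadline $2T-\dist{\agentVariable}{\startpos{\agentVariable}}{v_i}$, removes this risk at the cost of only a constant factor in the rendezvous time and yields the claimed $O(n\cdot \Topts)$ bound.
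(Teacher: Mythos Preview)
Your proposal is correct and follows essentially the same approach as the paper: a doubling search over a time budget, with each phase subdivided into $n$ fixed-length slots (one per vertex in identifier order), in which each agent visits the candidate vertex if reachable and pads to the slot boundary so that both agents are synchronized without any communication; correctness then follows from Lemma~\ref{lem:meet_on_nodes} and a geometric sum. The only cosmetic difference is the placement of the padding: you have each agent wait at $v_i$ until time $2T-\dist{\agentVariable}{\startpos{\agentVariable}}{v_i}$ and then return, whereas the paper splits the wait as $x-x'$ at $v_j$ and another $x-x'$ back at $\startpos{\agentVariable}$; both guarantee simultaneous presence at the target vertex (at the midpoint $T$ in your version, at time $x$ in the paper's) and identical slot length $2T$, so the analyses coincide.
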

\begin{proof}
We start by giving an algorithm.
Its first step in an initialization and the remaining steps form a loop.
Denote $V=\{v_1,\ldots,v_n\}$.
\begin{enumerate}
 \item\label{st:1} Let initially $x:=1$. Let ${\agentVariable}$ be the executing agent.
 \item\label{st:2} For each $j\in\{1,\ldots,n\}$ do:
   \begin{enumerate}[label={\normalfont \ref{st:2}.\arabic*.}]
    \item If $\dist{\agentVariable}{\startpos{{\agentVariable}}}{v_j}\leq x$, then set $x':=\dist{\agentVariable}{\startpos{\agentVariable}}{v_j}$ and go to $v_j$ along a shortest path. Otherwise, set $x':=0$.
    \item Wait $x-x'$ time units at the current node.
    \item Return to $\startpos{{\agentVariable}}$ along a shortest path. (This step is vacuous if $x'=0$.)
    \item Wait $x-x'$ time units.
   \end{enumerate}
 \item\label{st:3} Set $x:= 2x$ and return to Step~\ref{st:2}.
\end{enumerate}

Let us introduce some notation regarding the above algorithm.
We divide the time into \emph{phases}, where the $p$-th phase, $p\geq 0$, consists of all time units in which both agents were performing actions determined in Step~\ref{st:2} for $x=2^{p}$.
Then, each phase is further subdivided into \emph{stages}, where the $s$-th stage, $s\in\{1,\ldots,n\}$, of the $p$-th phase consists of all time units in which both agents were performing actions determined in Step~\ref{st:2} for $x=2^{p}$ and $j=s$.
Note that these definitions are correct since both agents simultaneously start at time $0$.

First observe, by a simple induction on the total number of stages, that at the beginning of each stage each agent ${\agentVariable}\in\{A,B\}$ is present at $\startpos{{\agentVariable}}$.
We now prove that both agents are guaranteed to rendezvous at a rendezvous node $v$ in the $p$-th phase, where $2^p\geq \max\{\dist{A}{\starta}{v},\dist{B}{\startb}{v}\}$.
Consider the $s$-th stage of $p$-th phase such that $v_s=v$.
Since $2^p\geq \max\{\dist{A}{\starta}{v},\dist{B}{\startb}{v}\}$, both agents reach $v$ in at most $2^p$ moves.
Due to the waiting time of $2^p-\dist{{\agentVariable}}{\startpos{{\agentVariable}}}{v}$ of agent ${\agentVariable}\in\{A,B\}$ after reaching $v$, we obtain that both agents are present at $v$ at the end of the $2^p$-th time unit of the $s$-stage in the $p$-th phase.
This completes the proof of the correctness of our algorithm.

It remains to bound the time in which the agents rendezvous.
The duration of the $p$-th phase is $O(n2^p)$.
The total number of phases is at most $P=\lceil \log \max\{\dist{A}{\starta}{v},\dist{B}{\startb}{v}\} \rceil$.
Thus, the agents rendezvous in time
\[
O(n\sum_{p=1}^P2^p)=O(n2^P)=O\left(n\cdot \max\{\dist{A}{\starta}{v},\dist{B}{\startb}{v}\} \right).
\]
Lemma~\ref{lem:meet_on_nodes} implies that the agents rendezvous in time $O(n\cdot\Topt{\starta}{\startb})$ as required.
\qed
\end{proof}

\begin{theorem} \label{thm:case1+no-comm}
Any  algorithm that without performing any communication guarantees rendezvous uses time $\Omega(n\cdot\Topt{\starta}{\startb})$, where $n$ is the number of nodes of the network.
\end{theorem}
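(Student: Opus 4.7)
The plan is to lower-bound the no-communication rendezvous time by exhibiting a family of $n$-node instances that share a common $w_A$, so that $A$'s (deterministic) trajectory is the same across the whole family, while $w_B$ makes the unique fast rendezvous point vary from instance to instance. The structure is a variant of the $K_{2,n}$ gadget used in the proof of Theorem~\ref{thm:arbitrary+comm+lower}.

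Concretely, I take $V=\{\starta,\startb,v_1,\ldots,v_{n-2}\}$ with all edges $\{\starta,v_i\}$ and $\{\startb,v_i\}$, fix a large weight $M$ (say $M=2n$), and set uniformly $w_A(\{\starta,v_i\})=1$ and $w_A(\{\startb,v_i\})=M$ for every $i$, \emph{independently} of the instance. For each $j^*\in\{1,\ldots,n-2\}$ I define an instance $G_{j^*}$ by giving $B$ a single cheap edge $w_B(\{\startb,v_{j^*}\})=1$ and setting every other $w_B$-weight (including all $w_B(\{\starta,v_i\})$) to $M$. In every instance both agents can reach $v_{j^*}$ in one time unit, so $\Topt{\starta}{\startb}=1$.

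The heart of the argument is a pigeonhole step against an adversary-independent trajectory. Because no messages are exchanged, $A$'s actions depend only on $(G,\starta,\startb,w_A)$ and are therefore identical in every $G_{j^*}$. Fix any target time $T<n-2$ (so in particular $T<M$). Then $A$ can only use its weight-$1$ edges $\{\starta,v_i\}$, so the set $S$ of vertices $v_i$ ever occupied by $A$ during $[0,T]$ has $|S|\leq T$; $B$, in turn, can only cheaply traverse $\{\startb,v_{j^*}\}$ and hence during $[0,T]$ is confined to $\startb$, $v_{j^*}$, or in transit on that single edge. A short exhaustive case check then shows that the only possible meeting location in time at most $T$ is the node $v_{j^*}$: rendezvous at $\starta$ or $\startb$ fails because neither agent can reach the other side within time $T<M$; rendezvous at $v_i$ with $i\neq j^*$ fails because $B$ cannot reach such a vertex in time; edge rendezvous fails because $A$'s occupied edges $\{\starta,v_i\}$ are vertex-disjoint from $B$'s only occupiable edge $\{\startb,v_{j^*}\}$ (they share no endpoint except possibly $v_{j^*}$, which is the node case).

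Finishing up is immediate: since $|S|\leq T<n-2$ while there are $n-2$ candidates for $j^*$, some $j^*\notin S$ exists, and on that instance rendezvous cannot occur within time $T$. Consequently any algorithm's worst-case rendezvous time on this family is at least $n-2=\Omega(n)=\Omega(n\cdot\Topts)$. I expect the main technical nuisance to be the exhaustive case analysis ruling out edge meetings and meetings at unintended vertices; picking $M$ comfortably larger than the targeted time $T\approx n$ (e.g.\ $M=2n$) makes all of those cases collapse cleanly via the single observation that any weight-$M$ edge cannot be used within time $T$.
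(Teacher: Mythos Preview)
Your proposal is correct and follows essentially the same approach as the paper: a $K_{2,n}$-type gadget with $w_A$ fixed (so $A$'s trajectory is instance-independent) and the single cheap $w_B$-edge chosen adversarially to avoid $A$. The only cosmetic differences are that you phrase the adversary step as a pigeonhole over a family $\{G_{j^*}\}$, use $M=2n$ rather than $X=n$, and are more explicit about ruling out edge meetings; the paper's version is terser but structurally identical.
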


\begin{proof}
Let us consider the complete bipartite graph $G$ given in the proof of Theorem~\ref{thm:arbitrary+comm+lower}
with $V(G) = \{\starta, \startb, v_1, v_2, \ldots v_n\}$
and $E=E_A\cup E_B$, where
\[E_{\agentVariable} = \left\{ \{\startpos{{\agentVariable}}, v_j\} \st j \in \{1, 2, \ldots n \} \right\},\quad {\agentVariable}\in\{A,B\}.\]

Let $w_A(e)=X$ for each $e\in E_B$ and $w_A(\{\starta, v_i\}) = 1$ for each $e\in E_A$, where $X$ is some sufficiently big integer, say $X=n$.
We will now give a partial definition of $w_B$, starting with $w_B(e)=X$ for each $e\in E_A$.
This weight functions will be constructed in such a way that rendezvous at time $1$ is possible.
Informally, we will set only one edge in $E_B$ to have weight $1$ for the agent $B$ while the remaining edges will have weight $X$.
This is done by analyzing possible moves of the agent $A$.

Now, we consider an arbitrary sequence of moves of agent $A$ during the first $n$ time units.
Clearly, after this time, agent $A$ is not able to reach $s_B$.
There also exists an edge $\{\starta,v_j\}\in E_A$ that agent $A$ performed no move along it, i.e., $A$ did not visit $v_j$.
We set $w_B(\{\starta, v_j\}) := 1$ and $w_B(\{\starta, v_i\}) := X$ for all $i \neq j$.

It is easy to observe that $\Topt{\starta}{\startb}$ is equal to $1$ and this time can be achieved only by a meeting at $v_j$.
However, $A$ and $B$ did not rendezvous during the first $n$ time units.
Thus, there exists no algorithm that guarantees rendezvous in time $o(n \cdot \Topt{\starta}{\startb})$.
\qed\end{proof}

\subsection{Lower bound for the case of ordered edges without communication}

\begin{theorem} \label{thm:case2+no-comm}
In the case of ordered edges, any algorithm that guarantees rendezvous without performing any communication uses time $\Omega(\sqrt{n}\cdot\Topt{\starta}{\startb})$, where $n$ is the number of nodes of the network.
\end{theorem}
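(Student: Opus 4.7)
My plan is to reuse the family of graphs $\cG=\{G_1,\ldots,G_k\}$ constructed in the proof of Theorem~\ref{thm:case2+comm+lower}, for which $n=k^2+k+2$ (so $k=\Theta(\sqrt{n})$), $X=k^4$, and $\Topt{\starta}{\startb}<2X$ holds uniformly across the family. The crucial leverage in the no-communication setting is that the tuple $(G,\starta,\startb,w_A)$ is identical across every $G_j\in\cG$, so any deterministic algorithm that exchanges no messages must produce \emph{the same} trajectory for agent $A$ in every instance $G_j$. I would then exhibit an index $j$ for which $A$'s single, fixed trajectory fails to rendezvous with $B$ in $G_j$ within time $o(kX)$.

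The first step is a counting observation. Since every edge $e$ of $G$ satisfies $w_A(e)\geq X+1$, in any time horizon $T$ agent $A$ touches at most $\lfloor T/(X+1)\rfloor+1$ edges, counting a possibly mid-traversed edge at time $T$. The only way for $A$ to enter the interior of a path $H_{j'}$ is to traverse the unique $\starta$-incident edge $e_{k^2+k+j'}$ of $H_{j'}$: the back-door route via $\startb$ would first require $A$ to reach $\startb$, which costs more than $(k+2)(X+1)>kX$ and therefore exceeds the horizons of interest. Consequently, setting $T=kX/5$ (and taking $k$ large), $A$ enters at most $k/5+1$ distinct paths, so there remain at least $k/2$ indices $j\in\{1,\ldots,k\}$ for which $A$ never visits any vertex on $H_j$ other than $\starta$.

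Fix such an index $j$ and consider the instance $G_j$. Following the same case analysis as in Theorem~\ref{thm:case2+comm+lower}, any rendezvous in $G_j$ at time at most $kX/2$ must take place strictly inside the path $H_j$: rendezvous at $\starta$ is excluded because every $\starta$-incident edge $e_{k^2+k+j'}$ satisfies $w_B(e_{k^2+k+j'})\geq kX$ (the index $k^2+k+j'$ is strictly larger than $k^2+k-j+1$ whenever $j,j'\geq 1$), so $B$ cannot reach $\starta$ within time $kX$; and the half-path arguments rule out rendezvous on $H_{j'}$ for any $j'\neq j$. Since $A$ never enters the interior of $H_j$, such a rendezvous is impossible, and combined with $\Topts<2X$ we conclude that the rendezvous time is $\Omega(kX)=\Omega(\sqrt{n}\cdot\Topts)$.

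The main obstacle is verifying that $A$ cannot defeat the counting argument by routing to $H_j$ through $\startb$; the numerical gap between the $kX/5$ horizon and the $(k+2)(X+1)$ cost of any $\starta$-$\startb$ path closes this, but the constants deserve careful bookkeeping. Beyond this point, the proof is essentially a specialisation of the argument from Theorem~\ref{thm:case2+comm+lower} to the degenerate situation where the set of distinct messages $A$ can receive from $B$ has size one, which makes the existence of a uniformly bad index $j$ immediate rather than requiring a communication-counting step.
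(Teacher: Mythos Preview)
Your proposal is correct and follows essentially the same route as the paper: reuse the family $\cG$ from Theorem~\ref{thm:case2+comm+lower}, note that $A$'s input $(G,\starta,\startb,w_A)$---and hence its trajectory---is identical across all $G_j$, use an edge-count to exhibit some $H_j$ that $A$ never enters within the horizon, and invoke the ``rendezvous before $kX/2$ must lie on $H_j$'' fact to conclude. One cosmetic point: you bound $A$'s reach only up to time $kX/5$ but then state the location lemma with horizon $kX/2$, so strictly your combination only excludes rendezvous up to $kX/5$; this still yields $\Omega(kX)=\Omega(\sqrt{n}\cdot\Topts)$, but using a single horizon throughout (the paper simply takes $kX/2$, which already leaves at least one $H_j$ unvisited) would tidy the bookkeeping.
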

\begin{proof}
We will use the same family of graphs $\cG$ as constructed in the proof of Theorem~\ref{thm:case2+comm+lower}; see also Figure~\ref{fig:fig_case2}.
Recall that if $A$ and $B$ rendezvous on a path $H_{j'}$ in time at most $kX/2$, then $j'=j$ and
$\Topts = O(X)$ for each $G_j\in\cG$.

\smallskip
Note that, the agent $A$ has the same input for each graph in $\cG$ since $w_A$ is the same for all graphs in $\cG$.
Thus, for any algorithm $\cA$, the agent $A$ traverses the same sequence of edges for each graph in $\cG$.
Moreover, rendezvous time bounded by $kX/2$ (see the proof of Theorem~\ref{thm:case2+comm+lower}) implies that there exist edges adjacent to $\starta$ that $A$ does not traverse.
In other words, there exists $j\in\{1,\ldots,k\}$ such that $A$ traverses no edge of $H_j$.
Therefore, we obtain that $A$ and $B$ cannot rendezvous in $G_j$ in time less than $kX/2$.
Since $k=\Theta(\sqrt{n})$ and rendezvous can be achieved in time $O(X)$ for each graph in $\cG$ the proof has been completed.
\qed\end{proof}

\section{Final remarks}
It seems that the most interesting and challenging among the analyzed cases is the one of ordered edges without communication.
There is still a substantial gap between the lower and the upper bounds we have provided and we leave it an interesting open question whether there exists an algorithm with a better approximation ratio than that of $O(n\Topts)$.
It is also intersting if the upper bound $M$ on the weights of the edges affects the communication complexity
for arbitrary functions and the cases of ordered edges.

Another interesting research direction is to analyze scenarios in which we allow agents to communicate at any time. 
To point out an advantage that the agents may gain in such case, note that the agents can rendezvous very quickly in graphs that we used for a lower bound in the proof of the Theorem~\ref{thm:case2+comm+lower}.
Indeed, transmitting just one bit in the moment correlated with the index of the preferred (optimal for rendezvous) path would help the agents to learn which path they should follow.

\end{document}